\renewcommand{\paragraph}{%
	\@startsection{paragraph}{4}%
	{\z@}{1.75ex \@plus 1ex \@minus .2ex}{-0.7em}%
	{\normalfont\normalsize\bfseries}%
}
\let\originalleft\left
\let\originalright\right
\renewcommand{\left}{\mathopen{}\mathclose\bgroup\originalleft}
\renewcommand{\right}{\aftergroup\egroup\originalright}
\pgfplotsset{compat=1.10}
\setlist[enumerate,1]{label=(\arabic*)}
\setlist[itemize,1]{label=--}
\setlist[itemize,2]{label=--}
\setlist[itemize,3]{label=--}
\setlist[itemize,4]{label=--}
\theoremstyle{definition}
\newtheorem{theorem}{Theorem}
\newtheorem{proposition}{Proposition}
\newtheorem{lemma}{Lemma}
\newtheorem{corollary}{Corollary}
\newtheorem{remark}{Remark}
\newtheorem{example}{Example}
\newtheorem{definition}{Definition}
\newtheoremstyle{named}
	{\topsep}					
	{\topsep}					
	{}							
	{0pt}						
	{\bfseries}					
	{}							
	{5pt plus 1pt minus 1pt}	
	{\thmnote{#3}}				
\theoremstyle{named}
\newtheorem{namedthm}{}
\renewcommand{\geq}{\geqslant}
\renewcommand{\leq}{\leqslant}
\xpatchcmd{\proof}{\itshape}{\proofheadfont}{}{}
\newcommand{\proofheadfont}{\slshape}
\crefname{page}{p.}{pp.}
\crefname{equation}{equation}{equations}
\crefname{section}{section}{sections}
\crefname{subsection}{section}{sections}
\crefname{subsubsection}{section}{sections}
\crefname{appsec}{appendix}{appendices}
\crefname{supplsec}{supplemental appendix}{supplemental appendices}
\crefname{footnote}{footnote}{footnotes}
\crefname{figure}{figure}{figures}
\crefname{table}{table}{tables}
\crefname{theorem}{theorem}{theorems}
\crefname{proposition}{proposition}{propositions}
\crefname{lemma}{lemma}{lemmata}
\crefname{corollary}{corollary}{corollaries}
\crefname{remark}{remark}{remarks}
\crefname{observation}{observation}{observations}
\crefname{example}{example}{examples}
\crefname{fact}{fact}{facts}
\crefname{definition}{definition}{definitions}
\crefname{assumption}{assumption}{assumptions}
\crefname{exercise}{exercise}{exercises}
\crefname{notation}{notation}{notation}
\crefname{claim}{claim}{claims}
\crefname{conjecture}{conjecture}{conjectures}
\DeclareMathOperator*{\argmax}{arg\,max}
\newcommand*{\xslant}[2][76]{%
	\begingroup
	\sbox0{#2}%
	\pgfmathsetlengthmacro\wdslant{\the\wd0 + cos(#1)*\the\wd0}%
	\leavevmode
	\hbox to \wdslant{\hss
		\tikz[
			baseline=(X.base),
			inner sep=0pt,
			transform canvas={xslant=cos(#1)},
		] \node (X) {\usebox0};%
		\hss
		\vrule width 0pt height\ht0 depth\dp0 %
	}%
	\endgroup
}
\newcommand*{\xslantmath}{}
\def\xslantmath#1#{%
	\@xslantmath{#1}%
}
\newcommand*{\@xslantmath}[2]{%
	\ensuremath{%
		\mathpalette{\@@xslantmath{#1}}{#2}%
	}%
}
\newcommand*{\@@xslantmath}[3]{%
	\xslant#1{$#2#3\m@th$}%
}
\def\namedlabel#1#2{\begingroup
	#2%
	\def\@currentlabel{#2}%
	\phantomsection\label{#1}\endgroup
}
\let\save@mathaccent\mathaccent
\newcommand*\if@single[3]{%
	\setbox0\hbox{${\mathaccent"0362{#1}}^H$}%
	\setbox2\hbox{${\mathaccent"0362{\kern0pt#1}}^H$}%
	\ifdim\ht0=\ht2 #3\else #2\fi
	}
\newcommand*\rel@kern[1]{\kern#1\dimexpr\macc@kerna}
\newcommand*\widebar[1]{\@ifnextchar^{{\wide@bar{#1}{0}}}{\wide@bar{#1}{1}}}
\newcommand*\wide@bar[2]{\if@single{#1}{\wide@bar@{#1}{#2}{1}}{\wide@bar@{#1}{#2}{2}}}
\newcommand*\wide@bar@[3]{%
	\begingroup
	\def\mathaccent##1##2{%
	  \let\mathaccent\save@mathaccent
	  \if#32 \let\macc@nucleus\first@char \fi
	  \setbox\z@\hbox{$\macc@style{\macc@nucleus}_{}$}%
	  \setbox\tw@\hbox{$\macc@style{\macc@nucleus}{}_{}$}%
	  \dimen@\wd\tw@
	  \advance\dimen@-\wd\z@
	  \divide\dimen@ 3
	  \@tempdima\wd\tw@
	  \advance\@tempdima-\scriptspace
	  \divide\@tempdima 10
	  \advance\dimen@-\@tempdima
	  \ifdim\dimen@>\z@ \dimen@0pt\fi
	  \rel@kern{0.6}\kern-\dimen@
	  \if#31
	    \overline{\rel@kern{-0.6}\kern\dimen@\macc@nucleus\rel@kern{0.4}\kern\dimen@}%
	    \advance\dimen@0.4\dimexpr\macc@kerna
	    \let\final@kern#2%
	    \ifdim\dimen@<\z@ \let\final@kern1\fi
	    \if\final@kern1 \kern-\dimen@\fi
	  \else
	    \overline{\rel@kern{-0.6}\kern\dimen@#1}%
	  \fi
	}%
	\macc@depth\@ne
	\let\math@bgroup\@empty \let\math@egroup\macc@set@skewchar
	\mathsurround\z@ \frozen@everymath{\mathgroup\macc@group\relax}%
	\macc@set@skewchar\relax
	\let\mathaccentV\macc@nested@a
	\if#31
	  \macc@nested@a\relax111{#1}%
	\else
	  \def\gobble@till@marker##1\endmarker{}%
	  \futurelet\first@char\gobble@till@marker#1\endmarker
	  \ifcat\noexpand\first@char A\else
	    \def\first@char{}%
	  \fi
	  \macc@nested@a\relax111{\first@char}%
	\fi
	\endgroup
}
\title{\scshape Misperception and informativeness\\in statistical discrimination\thanks{We are grateful for useful comments from Arjada Bardhi, Aislinn Bohren, Davide Bordoli, Henrique Castro-Pires, Chris Chambers, Gregorio Curello, Federico Echenique, Ben Golub, Ian Jewitt, Meg Meyer, Marco Ottaviani, Augustus Smith, and the audience at the LSE Perspectives on Economic Theory conference. Oscar Calvert provided excellent research assistance.}}
\author{%
\begin{tabular}{cc}
	Matteo Escudé
	& Paula Onuchic \\
	LUISS
	& London School of Economics \vspace{1em} \\
	\parbox{\widthof{Quitzé Valenzuela-Stookey}}{\centering Ludvig Sinander}
	& Quitzé Valenzuela-Stookey \\
	University of Oxford
	& UC Berkeley
\end{tabular}%
}
\date{18 June 2026}
\begin{document}

\maketitle

\begin{abstract}
We study the interplay of information and prior (mis)perceptions in a Phelps--Aigner--Cain-type model of statistical discrimination in the labor market. We decompose the effect on average pay of an increase in how informative observables are about workers' skills into a non-negative \emph{instrumental} component, reflecting increased surplus due to better matching of workers with tasks, and a \emph{perception-correcting} component capturing how extra information diminishes the importance of prior misperceptions about the distribution of skills in the worker population. We sign the perception-correcting term: it is non-negative (non-positive) if the population was ex-ante under-perceived (over-perceived). We then consider the implications for pay gaps between equally-skilled populations that differ in information, perceptions, or both, and identify conditions under which improving information narrows pay gaps.
\end{abstract}

\section{Introduction}
\label{sec:intro}

There are significant pay gaps between populations, for example between women and men and between different ethnic groups.%
\footnote{See e.g. \textcite{DoL2025} for the United States and \textcite{ONS2023ethnicity,ONS2024disability,ONS2024gender} for the United Kingdom.}
Possible contributing factors include discrimination, differences in human capital (potentially caused by discrimination earlier in life), and differences in individual behavior, such as occupation choice (also potentially influenced by discrimination).

The literature on \emph{statistical discrimination} argues that these pay gaps are partly explained by information-economic forces.%
\footnote{See e.g. \textcite{LangLehmann2012,AzmatPetrongolo2016,BertrandDuflo2017,BlauKahn2017,Neumark2018,LangSpitzer2020}. A related but distinct literature on ``taste-based'' discrimination \parencite{Becker1955,Becker1957} seeks explanations rooted in decision-makers' biased preferences, rather than in information.}
Statistical discrimination is possible whenever workers' skill is imperfectly observable. One way in which this can produce pay gaps is through \emph{misperception,} whereby firms believe that one population is less skilled on average than is actually the case. For example, users of the website StackExchange appear to harbor misperceptions about women's and men's abilities \parencite{BohrenImasRosenberg2019}.%
\footnote{Similarly, \textcite{AganStarr2018,ArnoldDobbieYang2018} present evidence of misperceptions (held by employers and by bail judges, respectively) about racial differences in the rates of commission of and conviction for crime in the United States.}
Another way in which skill unobservability can generate pay gaps is through differential \emph{informativeness,} when workers' observable characteristics (e.g. CVs and test scores) are less predictive of skill in one population than in another. For example, SAT scores are less predictive of academic ability for poorer students than for richer ones \parencite{Rothstein2004}.

In this paper, we study the interplay between misperception and informativeness in determining pay gaps, using a simple but general labor-market model in the spirit of the statistical-discrimination literature following \textcite{Phelps1972book,Phelps1972,AignerCain1977}.%
\footnote{Specifically, we adapt the model of \textcite{ChambersEchenique2021}.}
The most-emphasized result in the literature \parencite[see][]{AignerCain1977} is that a population's average pay is higher if its observable characteristics are more informative about skill. This finding relies on the assumption that perceptions are accurate. As for varying the perceptions themselves, it is intuitive and in fact true that a population is paid more on average the more favorably it is perceived.

Our starting point is the observation that these two effects interact: in particular, a population that is \emph{both} more informative \emph{and} more favorably perceived may be paid \emph{strictly less} on average. To see why, consider an economy in which every worker is either a ``high type'' with productivity~$1$ or a ``low type'' with productivity~$0$. Let $p \in (0,1)$ and $q \in (0,1)$ denote, respectively, the true and perceived fractions of high types. Assume that labor demand is competitive, so that each worker's pay equals her posterior expected productivity. If observable characteristics are completely uninformative about productivity, then all workers are paid $q$. If observables perfectly reveal productivity, then high types are paid $1$ and low types are paid $0$, so average pay is $p \cdot 1 + (1-p) \cdot 0 = p$. Thus if the population is over-perceived, i.e. $q>p$, then average pay is strictly \emph{lower} if observables perfectly reveal skill than if they are completely uninformative. The reason why this happens is that information corrects misperceptions, which is harmful for an over-perceived population. The implication for pay gaps is that between a population~$I$ with perception $q_I$ and completely informative observable characteristics and a population~$J$ with perception $q_J$ and completely uninformative characteristics, the pay gap $p-q_J$ is negative whenever population~$J$ is over-perceived ($q_J>p$), even if population~$I$ is more favorably perceived than population~$J$ ($q_I > q_J$). In other words, misperceptions can reverse the usual positive relationship between a population's average pay and the informativeness of its observable characteristics.

Our main theoretical contribution is to unpack this interaction between information and misperception. In particular, we decompose the change in a population's average pay when observables become more informative about skill into two terms. The first term, which we call \emph{instrumental,} captures how extra information is used to improve the assignment of workers to tasks, thereby increasing expected surplus and (thus) pay.%
\footnote{This term is zero in the example in the previous paragraph, since that example involves no task assignment.}
The second term, the \emph{perception-correcting component,} captures how the increased availability of information about workers diminishes the influence of prior misperceptions about the distribution of skills in the population. This term can be either negative or positive. (From the previous paragraph, we know that it can be negative enough to dominate the never-negative instrumental term.) Our theorem signs the perception-correcting component, showing that it is non-negative if the population was ex-ante under-perceived, non-positive if it was over-perceived, and zero if it was accurately perceived.

We then apply our decomposition to statistical discrimination, meaning pay gaps between populations whose (true) skill distributions are equal. In particular, we identify conditions under which a population~$I$ that is both more informative than and more favorably perceived than another population~$J$ will be paid more on average. The key condition turns out to be whether population~$J$ is under-perceived (relative to the true skill distribution): if yes, then population~$I$ will indeed earn more on average, and if no, then the reverse may occur (as shown above by example).

The spirit of the perception-correcting term is that prior perceptions about a population matter less when more information is available about individual workers. This suggests that the pay gap between two populations will narrow when more information is made available about workers in both populations, for example due to a policy intervention or technological progress. \textcite{BohrenImasRosenberg2019} confirm this intuition in a model of subjective evaluation (rather than of pay) that can be viewed as a special case of our model.%
\footnote{This result is not the main point of \textcite{BohrenImasRosenberg2019}: the authors have two other theoretical results, and their focus is on testing these empirically.}
We assess the validity of the ``information narrows pay gaps'' intuition beyond this special case, identifying (stringent) conditions under which it is correct, and showing that extra information may \emph{widen} pay gaps when these conditions do not hold.

This result has implications for the effectiveness of policy interventions that aim to reduce labor-market inequality by increasing information availability, such as testing regulations, transparency requirements, and the introduction of algorithmic screening tools. Such interventions are often motivated by the view that reducing informational asymmetries will both improve efficiency and narrow pay gaps. Our analysis shows that this intuition is incomplete: in the presence of misperceptions, adding information may worsen inequality even as it improves allocative efficiency. The actual welfare implications of informativeness-increasing policies depend on the nature of any misperceptions and on how new information reshapes task assignment.

\subsection{Related literature}
\label{sec:intro:lit}

We contribute to the statistical-discrimination literature initiated by \textcite{Phelps1972book,Phelps1972,AignerCain1977}, which shows how pay gaps can arise from firms' inference problem of estimating workers' unobservable skills from observables such as test scores and CVs.%
\footnote{One active strand of this literature concerns tests for distinguishing empirically between statistical and taste-based discrimination \parencite[e.g.][]{MartinMarx2021,Marx2022,BharadwajDebRenou2024,GaeblerGoel2025}. A related but distinct literature following \textcite{Arrow1973}, also called ``statistical discrimination,'' studies pay gaps arising from ``bad equilibria'' rather than from inference. For an overview of both theoretical literatures, see \textcite{FangMoro2011,Onuchic2025}.}
Within this literature, we take inspiration from \textcite{ChambersEchenique2021}, borrowing (and slightly enriching) their simple but general ``Phelpsian'' model of the labor market.

Most work on statistical discrimination has focused on the impact on pay of differential informativeness of observables about skill, under the assumption that firms accurately perceive the distribution of skills.%
\footnote{See the systematic literature review by \textcite{BohrenHaggagImasPope2025}.}
More recent work has argued for the importance of misperceptions in explaining empirical patterns of discrimination \parencite[e.g.][]{BohrenImasRosenberg2019,BohrenHaggagImasPope2025}.%
\footnote{One could also consider misperceptions about the structure of correlation between observables and skill. We discuss this possibility in \cref{sec:concl}.}
Our analysis elucidates how informativeness and misperception interact to determine pay gaps.

The instrumental component in our decomposition captures the familiar instrumental value of information, defined and characterized by \textcite{Blackwell1951,Blackwell1953}.%
\footnote{Blackwell's theorem is stated in \cref{sec:disc:accurate} below.}
The idea of a perception-correcting effect appears in the statistics literature,%
\footnote{See \textcite{Laplace1774,Wald1947,Girsanov1960,BlackwellDubins1962}.}
and plays a role in, for example, persuasion \parencite[e.g.][]{AlonsoCamara2016,OnuchicRay2023} and behavioral decision theory \parencite{Bordoli2025}. Bordoli's paper in particular distinguishes (informally) between instrumental and non-instrumental effects of new information.%
\footnote{See also \textcite{Morris1991,MorrisShin1997,Braghieri2023,Whitmeyer2024}.}
Our result on signing the perception-correcting component is closely related to \textcite{KartikLeeSuen2021}; in particular, it may be viewed as generalizing these authors' theorem to allow for task assignment.

\textcite{LiangLuMuOkumura2025} study statistical discrimination in a model much like ours, but assuming \emph{accurate} perceptions. They consider how average pay varies with task assignment and informativeness,%
\footnote{They actually interpret their model and results as describing algorithmic prediction (e.g. bail decisions), rather than task assignment and pay.}
identifying a trade-off between efficiency (higher average pay for both populations) and fairness (smaller pay gap) and characterizing the resulting frontier.

\subsection{Roadmap}
\label{sec:intro:roadmap}

\Cref{sec:model} describes the model, which predicts a population's average pay at a firm as a function of its (true and) perceived skill distribution and the informativeness of its observable characteristics about skill, as illustrated in \Cref{table:sum}.
\begin{figure}
\centering
\begin{tabular}{rrcc}
 & & \multicolumn{2}{c}{$\overset{\raisebox{3pt}{\text{\footnotesize information}}}{\rotatebox[origin=c]{180}{$\underbrace{\hspace{\widthof{\footnotesize information}}}$}}$} \vspace{4pt} \\
 & \multicolumn{1}{r|}{} & low & high \\ \cline{2-4}
\multirow{2}{\widthof{\footnotesize perception $\bigg\{$}}{\footnotesize perception $\bigg\{$} & \multicolumn{1}{r|}{unfavorable} & $w_1$ & $w_3$ \\
 & \multicolumn{1}{r|}{favorable} & $w_2$ & $w_4$
\end{tabular}
\caption{Illustration of the model and results.}
\label{table:sum}
\end{figure}
In \cref{sec:model:lemma}, we present a preliminary lemma asserting that more favorable perception increases average pay, i.e. $w_1 \leq w_2$ (and $w_3 \leq w_4$) in \Cref{table:sum}. Our central theorem, in \cref{sec:decomp}, decomposes the effect $w_3-w_1$ (or $w_4-w_2$) of new information, and signs the terms. In \cref{sec:disc}, we combine the previous results to sign $w_4-w_1$, interpreted as the pay gap between two equally-skilled populations with different perceptions and informativeness. Finally, in \cref{sec:gap}, we investigate whether $w_2-w_1 \geq w_4-w_3$, i.e. whether extra information narrows the pay gap between two equally-skilled but differently perceived populations. All proofs are in the appendix.

\section{Model}
\label{sec:model}

We consider a model in the spirit of \textcite{Phelps1972,AignerCain1977}. Specifically, we adopt the model of \textcite{ChambersEchenique2021}, and enrich it slightly by allowing for prior misperceptions.%
\footnote{Chambers and Echenique's results are briefly discussed in \cref{sec:disc:accurate} below.}

The surplus generated by a worker at a firm depends on her abilities and on the task to which she is assigned. Formally, each worker has a \emph{skill type} $\theta$ drawn from a finite set $\Theta \subset \mathbb{R}$ with $\lvert \Theta \rvert \geq 2$. A \emph{task} is a vector $a \in \mathbb{R}^\Theta$, where the interpretation is that $a(\theta)\in\mathbb{R}$ is how much surplus is generated (in expectation) when a worker of skill type $\theta \in \Theta$ performs task $a$. The set of all tasks is denoted $\mathcal{A} \coloneqq \mathbb{R}^\Theta$.

Firms are described by their technology, meaning what tasks are available. Formally, a \emph{firm} is a non-empty finite set $A \subset \mathcal{A}$. A firm $A$ is called \emph{monotone} if $A\subset\mathcal{A}_M\coloneqq\{ a\in\mathbb{R}^\Theta : \text{$a(\theta')>a(\theta)$ whenever $\theta'>\theta$}\}$, meaning that each task's surplus is increasing in the worker's skill. Considering all firms $A \subset \mathcal{A}$ allows for arbitrary differentiation across skill types, whereas considering only monotone firms $A \subset \mathcal{A}_M$ means that skill types are vertically differentiated (the higher a worker's skill type, the more productive she is at \emph{every} task). We primarily focus on monotone firms.

Worker skill is unobservable, so firms must estimate skill based on observables. We describe a worker's observables by a \emph{signal} $s$. The signal should be thought of as a vector of observable characteristics, perhaps including a CV and standardized test scores. Formally, the worker population's \emph{signal structure} is a pair $\langle S, \pi \rangle $, where $S$ is a non-empty finite set and $\pi : S \times \Theta \to [0,1]$ satisfies $\sum_{s \in S} \pi(s|\theta) = 1$ for each skill type $\theta \in \Theta$.%
\footnote{Signal structures are also called ``information structures'' or ``(Blackwell) experiments.''}
The interpretation is that $S$ is the set of possible signals $s$ (e.g. possible combinations of CV contents and test scores), and that $\pi(s|\theta)$ is the probability that a type-$\theta$ worker would have signal $s$. We assume that for every signal $s \in S$, there is at least one skill type $\theta \in \Theta$ such that $\pi(s|\theta)>0$.%
\footnote{This assumption is without loss of generality, because if there were a signal $s \in S$ with $\pi(s|\theta)=0$ for every $\theta \in \Theta$, then we could neglect $s$ entirely, by deleting it from $S$.}

Firms may misperceive the distribution of skill types in the population: firms' subjective prior \emph{perception} $q \in \Delta(\Theta)$ of the skill type distribution may differ from the true distribution $p \in \Delta(\Theta)$. For simplicity, we assume that both $q$ and $p$ have full support.

To estimate a worker's ability based on her signal $s \in S$, firms apply Bayes's rule, informed by their prior perception $q \in \Delta(\Theta)$ and by the signal structure $\langle S, \pi \rangle $. Concretely, the posterior probability which firms assign to a worker with signal $s \in S$ having skill type $\theta \in \Theta$ is
$$q_{\langle S, \pi \rangle }(\theta|s) \coloneqq \frac{q(\theta) \pi(s|\theta)}{\sum_{\theta' \in \Theta} q(\theta') \pi(s|\theta')}.$$

Each firm $A \subset \mathcal{A}$ produces efficiently, allocating each worker to whichever task $a \in A$ yields the highest expected surplus given that worker's signal $s \in S$, where the expected-surplus calculations are based on the perception $q$ and the signal structure $\langle S, \pi \rangle $. Thus the maximized expected surplus calculated by a firm $A \subset \mathcal{A}$ faced with a worker with signal $s \in S$ is
\begin{equation*}
w_A(s,q,\langle S, \pi \rangle )
\coloneqq \max_{a \in A} \sum_{\theta \in \Theta} q_{\langle S, \pi \rangle }(\theta|s) a(\theta) .
\end{equation*}

Labor demand is competitive, so each worker is paid the expected surplus that she generates.%
\footnote{Our conclusions do not change if workers are instead paid a fixed (i.e. signal-independent) fraction $\alpha \in (0,1)$ of their expected surplus.}$^,$%
\footnote{As usual, the assumption that labor demand is competitive admits a game-theoretic micro-foundation à la Bertrand. It is important to note that this micro-foundation involves assuming that all firms share the same perception.}
Average pay in the population is therefore
\begin{equation*}
W_A(p,q,\langle S, \pi \rangle ) \coloneqq \sum_{\theta\in\Theta} p(\theta)
\sum_{s\in S} \pi(s|\theta) w_A(s,q,\langle S, \pi \rangle ) .
\end{equation*}
Note the differing roles of $p$ and $q$: a worker's pay depends on firms' prior perception $q$ (via the posterior perception $q_{\langle S, \pi \rangle })$, but the averaging of different workers' pay is according to the true distribution $p$.

To analyze discrimination, we compare average pay across two separate populations $I$ and $J$ with respective (true) skill distributions $p_I$ and $p_J$, perceptions $q_I$ and $q_J$, and signal structures $\langle S_I, \pi_I \rangle $ and $\langle S_J, \pi_J \rangle $. Each worker's population membership ($I$ or $J$) is observable. Our focus is on statistical discrimination in the labor market: how the gap in average pay between the two populations is influenced by the perceptions $q_I,q_J$ and signal structures $\langle S_I, \pi_I \rangle ,\langle S_J, \pi_J \rangle $. In order to isolate statistical discrimination in the labor market from the important but separate issue of human-capital inequalities arising earlier in life, we always consider populations with equal skill distributions: $p_I = p_J = p$.%
\footnote{For the same reason, we rule out taste-based discrimination; this is implicit in our assumption that workers (in any population) are paid their expected surplus.}

We compare any two perceptions $q$ and $q'$ in terms of their \emph{favorableness} in the likelihood-ratio order: $q'\succsim_{LR}q$ if $q(\theta)q'(\theta') \geq q(\theta')q'(\theta)$ holds whenever $\theta'>\theta$. We say that the population is \emph{under-perceived} if $p \succsim_{LR} q$, and \emph{over-perceived} if $q \succsim_{LR} p$.

We compare any two signal structures $\langle S, \pi \rangle $ and $\langle S', \pi' \rangle $ in terms of their \emph{informativeness} in the garbling sense \parencite{Blackwell1951,Blackwell1953}: $\langle S', \pi' \rangle $ is more informative than $\langle S, \pi \rangle $, written $\langle S', \pi' \rangle \succsim_G\langle S, \pi \rangle $, if there exists a garbling kernel from $\langle S', \pi' \rangle $ to $\langle S, \pi \rangle $, i.e. a map $g : S \times S' \to [0,1]$ satisfying $\sum_{s \in S} g(s|s') = 1$ for each $s' \in S'$ such that $\pi(s|\theta) = \sum_{s' \in S'} g(s|s') \pi'(s'|\theta)$ for each $s \in S$ and $\theta \in \Theta$. A signal structure $\langle S, \pi \rangle $ is said to be \emph{MLR (monotone likelihood ratio)} if $S \subset \mathbb{R}$ and $\pi(s|\theta) \pi(s'|\theta') \geq \pi(s|\theta') \pi(s'|\theta)$ holds whenever $s'>s$ and $\theta'>\theta$.

\begin{remark}
\label{re:unit}
An alternative interpretation of our model is that $q$ is the true distribution of skill in the population, which firms perceive correctly, and that $p$ is the distribution of skill in a sub-population of interest (or a re-weighting of the full population according to some welfare weights%
\footnote{As in \textcite{Bergson1938,Samuelson1947}, and more recently in e.g. \textcite{DworczakKominersAkbarpour2021}.}%
). On this interpretation, when comparing two populations~$I$ and $J$ with different skill distributions $q_I \neq q_J$, the comparison is made ``like-for-like,'' between two sub-populations with equal skill distributions $p_I=p_J=p$. 

Relatedly, in correspondence studies such as \textcite{BertrandMullainathan2004},%
\footnote{There is a large literature on audit/correspondence studies of discrimination, beginning with \textcite{SchwartzSkolnick1962,Daniel1968,JowellPrescottclarke1970}. See \textcite[][section~7]{Neumark2018} for a survey.}
the researcher selects the distribution of CVs that are sent to firms: in other words, the distribution of signals. This is similar in spirit to selecting a skill distribution $p$, corresponding to skills in a sub-population of interest (given a signal structure $\langle S, \pi \rangle $, this gives rise to a distribution of signals, namely $s \mapsto \sum_{\theta \in \Theta} p(\theta) \pi(s|\theta)$). Correspondence studies measure discrimination via a ``like-for-like'' comparison of average outcomes in two (sub-)populations with equal CV distributions; we similarly measure discrimination by comparing (sub-)populations with equal skill distributions.
\end{remark}

\subsection{A preliminary result}
\label{sec:model:lemma}

Intuition suggests that whatever the true skill distribution $p$ and the signal structure $\langle S, \pi \rangle $, a population's average pay will be higher the more favorably it is perceived ex ante. The following lemma formalizes this intuition, and furnishes a converse. We use this result throughout the paper.

\begin{lemma}
\label{le:favor}
Fix a skill distribution $p$, and consider two perceptions $q$ and $q'$.
\begin{enumerate}[label=(\alph*)]
\item \label{le:favor:suff} If $q'$ is more favorable than $q$ ($q'\succsim_{LR}q$), then for any signal structure $\langle S, \pi \rangle $, 
$$W_A(p,q',\langle S, \pi \rangle )\geq W_A(p,q,\langle S, \pi \rangle ) \quad \text{for any monotone firm $A\subset \mathcal{A}_M$.}$$
\item \label{le:favor:nec} If $q'$ is not more favorable than $q$ ($q'\not\succsim_{LR}q$), then there exists a signal structure $\langle S, \pi \rangle $ such that 
$$W_A(p,q',\langle S, \pi \rangle ) < W_A(p,q,\langle S, \pi \rangle ) \quad \text{for any monotone firm $A\subset \mathcal{A}_M$.}$$
\end{enumerate}
\end{lemma}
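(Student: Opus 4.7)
My plan is to prove part (a) by observing that likelihood-ratio dominance of priors propagates to likelihood-ratio dominance of posteriors signal-by-signal, then using monotonicity of tasks via standard stochastic-dominance arguments. For part (b), I would isolate a pair of types witnessing the failure of LR dominance and construct a coarsening signal structure in which only this pair is pooled.

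For part (a), the first step is to verify that $q'\succsim_{LR}q$ implies $q'_{\langle S,\pi\rangle}(\cdot|s)\succsim_{LR}q_{\langle S,\pi\rangle}(\cdot|s)$ at every $s\in S$: since Bayes's rule makes each posterior proportional to prior times likelihood, multiplying the defining inequality $q(\theta)q'(\theta')\geq q(\theta')q'(\theta)$ (for $\theta'>\theta$) by $\pi(s|\theta)\pi(s|\theta')$ and dividing by the normalization constants yields the corresponding inequality for posteriors. LR dominance implies first-order stochastic dominance; since every $a\in A\subset\mathcal{A}_M$ is strictly increasing in $\theta$, one obtains $\sum_\theta q'_{\langle S,\pi\rangle}(\theta|s)a(\theta)\geq\sum_\theta q_{\langle S,\pi\rangle}(\theta|s)a(\theta)$ at every $s$ for every $a\in A$. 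Taking the max over $A$ gives $w_A(s,q',\langle S,\pi\rangle)\geq w_A(s,q,\langle S,\pi\rangle)$, and averaging against $p$ delivers the conclusion.

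For part (b), I would pick $\theta_1<\theta_2$ witnessing the LR violation, so that $q(\theta_1)q'(\theta_2)<q(\theta_2)q'(\theta_1)$, and take $\langle S,\pi\rangle$ to be the signal structure that perfectly reveals $\theta$ whenever $\theta\notin\{\theta_1,\theta_2\}$ and emits a distinct pooling signal $s_*$ whenever $\theta\in\{\theta_1,\theta_2\}$. At every signal other than $s_*$, both posteriors are degenerate on the revealed type, so pay is identical. At $s_*$, both posteriors are supported on $\{\theta_1,\theta_2\}$, and the chosen inequality rearranges to $q(\theta_2|s_*)>q'(\theta_2|s_*)$; equivalently, $q(\cdot|s_*)$ strictly first-order stochastically dominates $q'(\cdot|s_*)$ on this two-point set. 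For any strictly increasing $a\in\mathcal{A}_M$, the difference of expectations at $s_*$ equals the strictly positive product $[q(\theta_2|s_*)-q'(\theta_2|s_*)][a(\theta_2)-a(\theta_1)]$. Hence for any monotone firm $A$ and any $a^\dagger\in A$ that maximizes the $q'$-problem at $s_*$, $w_A(s_*,q,\langle S,\pi\rangle)\geq\sum_\theta q(\theta|s_*)a^\dagger(\theta)>\sum_\theta q'(\theta|s_*)a^\dagger(\theta)=w_A(s_*,q',\langle S,\pi\rangle)$. Averaging against $p$, using $p(\theta_1)+p(\theta_2)>0$ by full support, yields the strict pay reversal for every monotone firm.

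The subtle point—and my main concern—is the uniform-in-$A$ requirement in part (b): a single signal structure must produce a strict reversal for \emph{every} monotone firm. The coarsening-to-a-single-pair construction accomplishes this because the strict inequality at $s_*$ holds task-by-task, by combining strict monotonicity of tasks in $\mathcal{A}_M$ with strict FOSD on a two-point support, and task-by-task strict inequalities survive the firm's maximization via the standard $\max f\geq f(a^\dagger)>g(a^\dagger)=\max g$ argument.
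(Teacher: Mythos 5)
Your proposal is correct and follows essentially the same route as the paper's own proof: part (a) via propagation of likelihood-ratio dominance to posteriors, then FOSD against monotone tasks; part (b) via pooling exactly the violating pair of types while revealing all other types, and evaluating the $q'$-optimal task under $q$ to get a strict, firm-uniform reversal. The only (immaterial) difference is that in part (a) you conclude by taking the max on both sides task-by-task, whereas the paper evaluates the $q$-optimal task under $q'$.
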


\section{Decomposing the impact of new information}
\label{sec:decomp}

In this section, we present our main theoretical contribution: a decomposition of the impact of new information on average pay into two components reflecting distinct \emph{instrumental} and \emph{perception-correcting} effects, and a result which signs these two components. The instrumental effect captures how firms use extra information to tailor task assignment more finely to workers' likely skills. The perception-correcting effect reflects the diminished importance of any misperception $q \neq p$ as information becomes more precise; indeed, in the extreme case in which the signal structure becomes fully informative, the perception $q$ ceases entirely to matter for pay, since each worker's skill type is revealed and she is paid accordingly.

To describe our decomposition, fix a firm $A \subset \mathcal{A}$, a skill distribution $p$, and a perception $q$. Consider a shift from signal structure $\langle S, \pi \rangle $ to $\langle S', \pi' \rangle $, where the latter is more informative (that is, $\langle S', \pi' \rangle  \succsim_G \langle S, \pi \rangle $), and let $g$ be a garbling kernel from $\langle S', \pi' \rangle $ to $\langle S, \pi \rangle $.%
\footnote{\label{footnote:selection_g}The garbling kernel $g$ is unique if $\pi'$, when viewed as an $\lvert S' \rvert \times \lvert \Theta \rvert$ matrix, has rank equal to $\lvert S' \rvert$. This condition holds generically if $\lvert S' \rvert \leq \lvert \Theta \rvert$, and fails if $\lvert S' \rvert > \lvert \Theta \rvert$. If the condition fails, then there are multiple garbling kernels from $\langle S', \pi' \rangle $ to $\langle S, \pi \rangle $; we arbitrarily select one, $g$, and hold it fixed throughout. If a different selection were made, then the terms of our decomposition would potentially be different; however, our results about these terms are valid whatever the selection.}
Then there exists a joint distribution of signals and skill such that the distribution of $\theta$ conditional on both $s$ and $s'$ is the same as that conditional on $s'$ alone (in other words, $\theta$ is independent of $s$ conditional on $s'$). Specifically, that joint distribution is $\mu_p \in \Delta(\Theta \times S \times S')$ given by
$$\mu_p(\theta,s,s') \coloneqq p(\theta) \pi'(s'|\theta) g(s|s')
\quad \text{for all $\theta \in \Theta$, $s \in S$ and $s' \in S'$.}$$
We denote the derived marginal distributions by $\mu_p(s)$, $\mu_p(s')$, $\mu_p(s,s')$, and so on, and the conditional distributions by $\mu_p(\theta|s)$, $\mu_p(\theta|s')$, $\mu_p(\theta|s,s')$, $\mu_p(s'|s)$, etc. From the perspective of firms, who believe that skill is distributed according to $q$ rather than $p$, the corresponding joint distribution is $\mu_q \in \Delta(\Theta \times S \times S')$ given by
$$\mu_q(\theta,s,s') \coloneqq q(\theta) \pi'(s'|\theta) g(s|s')
\quad \text{for all $\theta \in \Theta$, $s \in S$ and $s' \in S'$.}$$
For each signal $s \in S$ of the less informative signal structure $\langle S, \pi \rangle $, let
$$\widehat{a}_s \in \argmax_{a \in A} \sum_{\theta \in \Theta} q_{\langle S, \pi \rangle }(\theta|s) a(\theta)$$
denote firm $A$'s surplus-maximizing task choice for a worker with signal $s$.%
\footnote{\label{footnote:selection_a}In the (non-generic) case in which there are multiple surplus-maximizing task assignments, we arbitrarily select one, $s \mapsto \widehat{a}_s$, and hold it fixed throughout. As in \cref{footnote:selection_g}, the terms of our decomposition are not invariant to which selection is made, but our results hold for any such selection.}
Similarly, for each $s' \in S'$, let
$$\widehat{a}'_{s'} \in \argmax_{a \in A} \sum_{\theta \in \Theta} q_{\langle S', \pi' \rangle }(\theta|s') a(\theta)$$
denote a surplus-maximizing task choice under the more informative signal structure $\langle S', \pi' \rangle $.%
\footnote{Again, we arbitrarily select one task assignment, $s' \mapsto \widehat{a}'_{s'}$, and hold it fixed throughout.}

We shall decompose the change $W_A(p,q,\langle S', \pi' \rangle ) - W_A(p,q,\langle S, \pi \rangle )$ in average pay into two terms. The first term, which we call \emph{perception-correcting,} captures the effect on average pay of the change in beliefs induced by the extra information contained in $\langle S', \pi' \rangle $, \emph{holding fixed} the firm's task-assignment choices $s \mapsto \widehat{a}_s$. The second term, called \emph{instrumental,} captures the increase of expected surplus from firms using the extra information in $\langle S', \pi' \rangle $ to better tailor task assignment to workers' likely skills, \emph{holding perceptions fixed.} The formal definitions of these two effects are as follows.

\begin{definition}
For any firm $A \subset \mathcal{A}$, skill distribution $p$, perception $q$, and signal structures $\langle S', \pi' \rangle  \succsim_G \langle S, \pi \rangle $, the \emph{perception-correcting component} of the change in average pay is
\begin{multline*}
\mathcal{C}_A(p,q,\langle S, \pi \rangle ,\langle S', \pi' \rangle )
\\
\begin{aligned}
\coloneqq{}& 
\sum_{s' \in S'} \mu_p(s') \sum_{s \in S} 
\left[ 1 - \frac{ \mu_p(s) / \mu_q(s) }{ \mu_p(s') / \mu_q(s') } \right]
\sum_{\theta \in \Theta} q_{\langle S', \pi' \rangle }(\theta|s') g(s|s') \widehat{a}_s(\theta)
\\
={}& 
\sum_{s \in S} \mu_p(s)
\sum_{s' \in S'} \left[ \mu_p(s'|s) - \mu_q(s'|s) \right]
\sum_{\theta \in \Theta} \mu_q(\theta|s,s') \widehat{a}_s(\theta) ,
\end{aligned}
\end{multline*}
and the \emph{instrumental component} of the change in average pay is
\begin{multline*}
\mathcal{I}_A(p,q,\langle S, \pi \rangle ,\langle S', \pi' \rangle )
\\
\begin{aligned}
\coloneqq{}& 
 \sum_{s \in S} \sum_{s' \in S'} \mu_p(s,s') 
\sum_{\theta \in \Theta} \mu_q(\theta|s,s') \left[ \widehat{a}'_{s'}(\theta) - \widehat{a}_s(\theta) \right]
\\
={}&
\sum_{s' \in S'} \mu_p(s')
\sum_{\theta \in \Theta} q_{\langle S', \pi' \rangle }(\theta|s')
\left[ \widehat{a}'_{s'}(\theta)
- \sum_{s \in S} g(s|s') \widehat{a}_s(\theta) \right] .
\end{aligned}
\end{multline*}
\end{definition}

\begin{example}
\label{ex:decomp}
Suppose that skill types are binary, $\Theta = \{0,1\}$. Let $\langle S, \pi \rangle$ be an uninformative signal structure: $\pi(s|0)=\pi(s|1)$ for every $s \in S$. Let $\langle S', \pi' \rangle$ be fully informative: $S' = \{s^0,s^1\}$ and $\pi'(s^0|0) = \pi'(s^1|1) = 1$. Obviously these signal structures are MLR and satisfy $\langle S', \pi' \rangle \succ_G \langle S, \pi \rangle$.

Let $\overline{a},\widetilde{a} \in \mathcal{A}_M$ be the tasks defined by $\overline{a}(\theta)\coloneqq \theta$ and $\widetilde{a}(\theta)\coloneqq \theta/2 + 1/4$ for each $\theta \in \Theta$. Consider the monotone firm $A=\{\overline{a},\widetilde{a}\}$, and suppose to begin with that the population is under-perceived: $q(1)=1/4$ and $p(1) = 3/4$.

Under the uninformative signal structure $\langle S,\pi\rangle$, each worker is believed to be the high skill type $\theta=1$ with probability $q(1)=1/4$, so tasks $\overline{a}$ and $\widetilde{a}$ yield expected surplus $(3/4) \cdot 0 + (1/4) \cdot 1 = 1/4$ and $(3/4) \cdot (1/4) + (1/4) \cdot (3/4) = 3/8$, respectively. Hence every worker is assigned to $\widetilde{a}$ and paid $3/8$.

The perception-correcting component captures how the extra information in the fully informative signal structure $\langle S', \pi' \rangle $ affects average pay, holding task assignment fixed: that is, we imagine that all workers continue to be assigned to task $\widetilde{a}$. The fully informative signal structure $\langle S', \pi' \rangle $ reveals each worker's skill type $\theta$, so the firm learns that the true fraction of high skill types is $p(1)=3/4$ rather than $q(1)=1/4$. The resulting change in average pay is
\begin{equation*}
\mathcal{C}_A (p,q,\langle S,\pi\rangle,\langle S',\pi'\rangle)
= \left( \frac{1}{4}-\frac{3}{4} \right) \cdot \widetilde{a}(0)
+ \left( \frac{3}{4}-\frac{1}{4} \right) \cdot \widetilde{a}(1)
= \frac{1}{4},
\end{equation*}
as represented by the arrow in \Cref{fig:2-act-under:correction}.

\begin{figure}
\centering
\begin{subfigure}[t]{0.48\textwidth}
\centering
\begin{tikzpicture}[x=5cm, y=5cm, line cap=round]

  \draw (0,0) -- (1,0);
  \draw[->] (0,0) -- (0,1.08);

  \node at (0,-0.06) {$0$};
  \draw (1,-0.01) -- (1,0.01);
  \node at (1,-0.06) {$1$};

  \node[left] at (0,0.25) {$\tfrac{1}{4}$};

  \draw (-0.01,0.375) -- (0.01,0.375);
  \node[left] at (0,0.375) {$\tfrac{3}{8}$};

  \draw (-0.01,0.625) -- (0.01,0.625);
  \node[left] at (0,0.625) {$\tfrac{5}{8}$};

  \draw (-0.01,0.75) -- (0.01,0.75);
  \node[left] at (0,0.75) {$\tfrac{3}{4}$};

  \draw[thick] (0,0) -- (1,1);
  \node[right] at (1,1) {$\overline{a}$};

  \draw[thick] (0,0.25) -- (1,0.75);
  \node[right] at (1,0.75) {$\widetilde{a}$};

  \draw (0.25,-0.01) -- (0.25,0.01);
  \node at (0.25,-0.06) {$q(1)$};

  \draw (0.75,-0.01) -- (0.75,0.01);
  \node at (0.75,-0.06) {$p(1)$};

  \draw[thin,dotted] (0.25,0) -- (0.25,0.375) -- (0.75,0.375) -- (0.75,0);
  \fill (0.25,0.375) circle (1.25pt);

  \fill (0.75,0.625) circle (1.25pt);
  \draw[very thick,->] (0.75,0.375) -- (0.75,0.62);
  \node[right] at (0.75,0.5) {$\mathcal{C}$};
\end{tikzpicture}
\caption{Perception-correcting term $\mathcal{C}$.}
\label{fig:2-act-under:correction}
\end{subfigure}
\hfill
\begin{subfigure}[t]{0.48\textwidth}
\centering
\begin{tikzpicture}[x=5cm, y=5cm, line cap=round]

  \draw (0,0) -- (1,0);
  \draw[->] (0,0) -- (0,1.08);

  \node at (0,-0.06) {$0$};
  \draw (1,-0.01) -- (1,0.01);
  \node at (1,-0.06) {$1$};

  \draw (-0.01,0.625) -- (0.01,0.625);
  \node[left] at (0,0.625) {$\tfrac{5}{8}$};

  \draw (-0.01,0.8125) -- (0.01,0.8125);
  \node[left] at (0,0.8125) {$\tfrac{13}{16}$};

  \node[left] at (0,0.25) {$\tfrac{1}{4}$};


  \draw (-0.01,1) -- (0.01,1);
  \node[left] at (0,1) {$1$};

  \draw[thick] (0,0) -- (1,1);
  \node[right] at (1,1) {$\overline{a}$};

  \draw[thick] (0,0.25) -- (1,0.75);
  \node[right] at (1,0.75) {$\widetilde{a}$};

  \draw (0.25,-0.01) -- (0.25,0.01);
  \node at (0.25,-0.06) {$q(1)$};

  \draw (0.75,-0.01) -- (0.75,0.01);
  \node at (0.75,-0.06) {$p(1)$};

  \draw[thin,dotted] (0,0.25) -- (1,1);

  \draw[thin,dotted,opacity=0.25] (0.25,0) -- (0.25,0.375) -- (0.75,0.375) -- (0.75,0);
  \fill (0.25,0.375) circle (1.25pt);

  \fill (0.75,0.625) circle (1.25pt);
  \begin{scope}[transparency group,opacity=0.25]
    \draw[very thick,->] (0.75,0.375) -- (0.75,0.62);
  \end{scope}
  \node[right,opacity=0.25] at (0.75,0.5) {$\mathcal{C}$};

  \fill (0.75,0.8125) circle (1.25pt);
  \draw[very thick,->] (0.75,0.625) -- (0.75,0.8075);
  \node[right] at (0.75,0.72) {$\mathcal{I}$};
  
\end{tikzpicture}
\caption{Instrumental term $\mathcal{I}$.}
\label{fig:2-act-under:instrumental}
\end{subfigure}
\caption{Decomposition in \Cref{ex:decomp} with under-perception.}
\label{fig:2-act-under}
\end{figure}

The instrumental component captures how average pay is affected by the change in task assignment caused by the availability of extra information, holding perceptions fixed at their new (``corrected'') level: that is, we imagine that the firm (correctly) believes that the fraction of type $\theta=1$ workers is $p(1)=3/4$.%
\footnote{In detail, applying the definitions, we see that what is ``corrected'' in the perception-correcting term is the \emph{signal-conditional-on-signal} distribution, which shifts from $\mu_q(s'|s)$ to $\mu_p(s'|s)$, and is then held fixed at $\mu_p(s'|s)$ in the instrumental term. (The type-conditional-on-signals distribution is the same in both terms, namely $\mu_q(\theta|s,s')$.) The present example has the simplifying feature that for each $r \in \{p,q\}$, the distribution $\mu_r(s'|s)$ is effectively the same as $r$ itself, since $\langle S, \pi \rangle $ is uninformative, $S'=\{s^0,s^1\}$, and $\pi'(s^\theta|\theta)=1$ for each $\theta \in \Theta$, so that $\mu_r(s^\theta|s) = \mu_r(s^\theta) = r(\theta)$ for all $s \in S$ and $\theta \in \Theta$.}
Since the extra information fully reveals workers' skill types and $\overline{a}(0)=0<1/4=\widetilde{a}(0)$ and $\overline{a}(1)=1>3/4=\widetilde{a}(1)$, workers of type $\theta=0$ continue to be assigned to task $\widetilde{a}$, while those of type $\theta=1$ are re-assigned to task $\overline{a}$. This changes average pay by
\begin{equation*}
\mathcal{I}_A (p,q,\langle S,\pi\rangle,\langle S',\pi'\rangle)
= p(0) \cdot 0
+ p(1) \cdot \left(1-\frac{3}{4}\right)
= \frac{3}{16},
\end{equation*}
as represented by the black arrow in \Cref{fig:2-act-under:instrumental}.

In case the population is over-perceived, $q(1)=3/4$ and $p(1) = 1/4$, symmetric calculations reveal that the instrumental term remains positive, but the perception-correcting term is negative, as depicted in \Cref{fig:2-act-under-rev}.

\begin{figure}[ht]
\centering
\begin{subfigure}[t]{0.48\textwidth}
\centering
\begin{tikzpicture}[x=5cm, y=5cm, line cap=round]

  \draw (0,0) -- (1,0);
  \draw[->] (0,0) -- (0,1.08);

  \node at (0,-0.06) {$0$};

  \draw (0.25,-0.01) -- (0.25,0.01);
  \node at (0.25,-0.06) {$p(1)$};

  \draw (0.75,-0.01) -- (0.75,0.01);
  \node at (0.75,-0.06) {$q(1)$};

  \draw (1,-0.01) -- (1,0.01);
  \node at (1,-0.06) {$1$};

  \node[left] at (-0.015,0.25) {$\frac{1}{4}$};

  \draw (-0.01,0.75) -- (0.01,0.75);
  \node[left] at (-0.015,0.75) {$\frac{3}{4}$};

  \draw (-0.01,1) -- (0.01,1);
  \node[left] at (-0.015,1) {$1$};

  \draw[thick] (0,0) -- (1,1);
  \node[right] at (1,1) {$\overline{a}$};

  \draw[thick] (0,0.25) -- (1,0.75);
  \node[right] at (1,0.75) {$\widetilde{a}$};

  \draw[thin,dotted] (0.75,0) -- (0.75,0.75) -- (0.25,0.75) -- (0.25,0);
  \fill (0.75,0.75) circle (1.25pt);

  \fill (0.25,0.25) circle (1.25pt);
  \draw[very thick,->] (0.25,0.75) -- (0.25,0.255);
  \node[right] at (0.25,0.6) {$\mathcal{C}$};

\end{tikzpicture}
\caption{Perception-correcting term $\mathcal{C}$.}
\label{fig:2-act-under:correction-rev}
\end{subfigure}
\hfill
\begin{subfigure}[t]{0.48\textwidth}
\centering
\begin{tikzpicture}[x=5cm, y=5cm, line cap=round]

  \draw (0,0) -- (1,0);
  \draw[->] (0,0) -- (0,1.08);

  \node at (0,-0.06) {$0$};

  \draw (0.25,-0.01) -- (0.25,0.01);
  \node at (0.25,-0.06) {$p(1)$};

  \draw (0.75,-0.01) -- (0.75,0.01);
  \node at (0.75,-0.06) {$q(1)$};

  \draw (1,-0.01) -- (1,0.01);
  \node at (1,-0.06) {$1$};

  \node[left] at (-0.015,0.25) {$\frac{1}{4}$};

  \draw (-0.01,0.4375) -- (0.01,0.4375);
  \node[left] at (-0.015,0.4375) {$\frac{7}{16}$};

  \draw (-0.01,0.75) -- (0.01,0.75);
  \node[left] at (-0.015,0.75) {$\frac{3}{4}$};

  \draw (-0.01,1) -- (0.01,1);
  \node[left] at (-0.015,1) {$1$};

  \draw[thick] (0,0) -- (1,1);
  \node[right] at (1,1) {$\overline{a}$};

  \draw[thick] (0,0.25) -- (1,0.75);
  \node[right] at (1,0.75) {$\widetilde{a}$};

  \draw[thin,dotted] (0,0.25) -- (1,1);

  \fill (0.75,0.75) circle (1.25pt);

  \fill (0.25,0.25) circle (1.25pt);
  \begin{scope}[transparency group,opacity=0.25]
    \draw[thin,dotted] (0.75,0) -- (0.75,0.75) -- (0.25,0.75) -- (0.25,0);
    \draw[very thick,->] (0.25,0.75) -- (0.25,0.255);
  \end{scope}
  \node[right,opacity=0.25] at (0.25,0.6) {$\mathcal{C}$};

  \fill (0.25,0.4375) circle (1.25pt);
  \draw[very thick,->] (0.25,0.25) to[out=160,in=200] (0.25,0.4375);
  \node[left] at (0.21,0.27) {$\mathcal{I}$};

\end{tikzpicture}
\caption{Instrumental term $\mathcal{I}$.}
\label{fig:2-act-under:instrumental-rev}
\end{subfigure}
\caption{Decomposition in \Cref{ex:decomp} with over-perception.}
\label{fig:2-act-under-rev}
\end{figure}
\end{example}

Our main result asserts that the change in average pay decomposes into perception-correcting and instrumental components, and that these components may be signed.

\begin{theorem}
\label{th:decomp}
Fix a firm $A \subset \mathcal{A}$, a skill distribution $p$, a perception $q$, and signal structures $\langle S', \pi' \rangle  \succsim_G \langle S, \pi \rangle $.
\begin{enumerate}[label=(\alph*)]
\item \label{th:decomp:decomp} The change in average pay admits the decomposition
\begin{multline*}
W_A(p,q,\langle S', \pi' \rangle ) - W_A(p,q,\langle S, \pi \rangle )
\\
= \mathcal{C}_A(p,q,\langle S, \pi \rangle ,\langle S', \pi' \rangle )
+ \mathcal{I}_A(p,q,\langle S, \pi \rangle ,\langle S', \pi' \rangle ) .
\end{multline*}
\item \label{th:decomp:instr} The instrumental component $\mathcal{I}_A(p,q,\langle S, \pi \rangle ,\langle S', \pi' \rangle )$ is non-negative.
\end{enumerate}
Suppose in addition that the firm is monotone ($A \subset \mathcal{A}_M$) and that the more informative signal structure $\langle S', \pi' \rangle $ is MLR.
\begin{enumerate}[label=(\alph*),resume]
\item \label{th:decomp:corr_pos} If the population is under-perceived ($p \succsim_{LR} q$), then the perception-correcting component $\mathcal{C}_A(p,q,\langle S, \pi \rangle ,\langle S', \pi' \rangle )$ is non-negative.
\item \label{th:decomp:corr_neg} If the population is over-perceived ($q \succsim_{LR} p$), then the perception-correcting component $\mathcal{C}_A(p,q,\langle S, \pi \rangle ,\langle S', \pi' \rangle )$ is non-positive.
\end{enumerate}
\end{theorem}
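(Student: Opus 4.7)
My plan is to prove all four parts by direct manipulation of the joint distributions $\mu_p$ and $\mu_q$, using two standard consequences of the garbling relation. First, Bayes' rule combined with the garbling identity $\pi(s|\theta) = \sum_{s'} g(s|s') \pi'(s'|\theta)$ yields $q_{\langle S, \pi \rangle}(\theta|s) = \mu_q(\theta|s)$ and $q_{\langle S', \pi' \rangle}(\theta|s') = \mu_q(\theta|s')$. Second, $\theta$ is conditionally independent of $s$ given $s'$ under both $\mu_p$ and $\mu_q$, so $\mu_\star(\theta|s, s') = \mu_\star(\theta|s')$ for $\star \in \{p, q\}$. Full support of $p,q$ together with the standing assumption that every $s \in S$ has some $\theta$ with $\pi(s|\theta) > 0$ makes all conditionings below well-defined.

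For part (a), I would rewrite $W_A(p, q, \langle S, \pi\rangle) = \sum_s \mu_p(s) \sum_\theta \mu_q(\theta|s) \widehat{a}_s(\theta)$ and expand the posterior by conditioning on $s'$, while $W_A(p, q, \langle S', \pi'\rangle) = \sum_s \mu_p(s) \sum_{s'} \mu_p(s'|s) \sum_\theta \mu_q(\theta|s') \widehat{a}'_{s'}(\theta)$ follows from $\mu_p(s') = \sum_s \mu_p(s) \mu_p(s'|s)$. Subtracting and inserting the intermediate term $\mu_p(s'|s)\,\mu_q(\theta|s')\,\widehat{a}_s(\theta)$ cleaves the difference into exactly $\mathcal{I}_A + \mathcal{C}_A$. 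Part (b) is a signal-by-signal Blackwell-style argument using the second form of $\mathcal{I}_A$: for each fixed $s'$, $\widehat{a}'_{s'}$ maximizes $a \mapsto \sum_\theta q_{\langle S', \pi'\rangle}(\theta|s') a(\theta)$ over $A$, while each $\widehat{a}_s \in A$ attains no more, so averaging against the probability weights $g(\cdot|s')$ preserves a non-negative bracketed term; weighting by $\mu_p(s') \geq 0$ and summing gives $\mathcal{I}_A \geq 0$.

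Parts (c) and (d) require a more delicate ordering argument. I would define $v_s(s') := \sum_\theta \mu_q(\theta|s') \widehat{a}_s(\theta)$, so that $\mathcal{C}_A = \sum_s \mu_p(s) \bigl[\mathbb{E}_{\mu_p(\cdot|s)}[v_s] - \mathbb{E}_{\mu_q(\cdot|s)}[v_s]\bigr]$, and then establish, signal-by-signal, that $v_s$ is non-decreasing on $S'$ and that $\mu_p(\cdot|s)$ and $\mu_q(\cdot|s)$ are LR-ordered on $S'$ in the direction fixed by the ordering of $p$ vs.\ $q$. Monotonicity of $v_s$ is routine: MLR of $\langle S', \pi'\rangle$ makes $\mu_q(\cdot|s')$ MLR-increasing in $s'$, while $A \subset \mathcal{A}_M$ makes $\widehat{a}_s$ increasing in $\theta$. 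The main obstacle is the conditional comparison, which I would handle via the computation $\mu_p(s'|s)/\mu_q(s'|s) = \bigl[\mu_p(s')/\mu_q(s')\bigr] \cdot \bigl[\mu_q(s)/\mu_p(s)\bigr]$---the garbling factor $g(s|s')$ cancels---together with the identity $\mu_p(s')/\mu_q(s') = \mathbb{E}_{\mu_q(\cdot|s')}[p(\theta)/q(\theta)]$. Under $p \succsim_{LR} q$, the ratio $p/q$ is non-decreasing in $\theta$, so this posterior expectation is non-decreasing in $s'$, yielding $\mu_p(\cdot|s) \succsim_{LR} \mu_q(\cdot|s)$ on $S'$ and hence FOSD; combined with monotonicity of $v_s$, this delivers $\mathcal{C}_A \geq 0$. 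The case $q \succsim_{LR} p$ reverses the monotonicity of $p/q$ and hence every downstream inequality, giving $\mathcal{C}_A \leq 0$.
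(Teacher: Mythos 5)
Your proposal is correct and follows the paper's proof almost exactly: the same rewriting of both $W_A$ terms around the common quantity $\sum_s \mu_p(s)\sum_{s'}\mu_p(s'|s)\sum_\theta \mu_q(\theta|s,s')\widehat{a}_s(\theta)$ for part (a), the same signal-by-signal optimality-of-$\widehat{a}'_{s'}$ argument for part (b), and the same two-claim structure for parts (c)--(d), namely monotonicity of $s'\mapsto\sum_\theta\mu_q(\theta|s,s')\widehat{a}_s(\theta)$ together with a stochastic ordering of $\mu_p(\cdot|s)$ and $\mu_q(\cdot|s)$ on $S'$. The one place you genuinely diverge is the proof of that stochastic-ordering claim: the paper works with CDFs, writing $\overline{\mu_p}(s'|s)-\overline{\mu_q}(s'|s)=\sum_\theta\bigl[p_{\langle S,\pi\rangle}(\theta|s)-q_{\langle S,\pi\rangle}(\theta|s)\bigr]\overline{\pi'}(s'|\theta)$ and pairing first-order stochastic dominance of the posteriors with the fact that $\theta\mapsto\overline{\pi'}(s'|\theta)$ is decreasing under MLR, whereas you compute the ratio $\mu_p(s'|s)/\mu_q(s'|s)$ directly, note that the garbling kernel cancels, and identify it (up to a constant in $s'$) with $\mathbb{E}_{\mu_q(\cdot|s')}[p(\theta)/q(\theta)]$, which is monotone in $s'$ by MLR. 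Your route yields the stronger conclusion $\mu_p(\cdot|s)\succsim_{LR}\mu_q(\cdot|s)$ rather than mere first-order dominance, and is arguably cleaner; both deliver exactly what the final inequality requires. (One small point you gloss over, which is harmless: the ratio argument needs the supports of $\mu_p(\cdot|s)$ and $\mu_q(\cdot|s)$ to coincide, which holds here because both vanish precisely when $g(s|s')=0$, given that $p$ and $q$ have full support.)
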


The non-negativity of the instrumental component (part~\ref{th:decomp:instr}) is a slight generalization of the ``easy'' half of Blackwell's (\citeyear{Blackwell1951,Blackwell1953}) theorem, which asserts the same conclusion under the additional hypothesis that the perception $q$ is accurate, i.e. equal to the true distribution $p$. The familiar intuition is that extra information cannot hurt a decision-maker, since she can always ignore it. (Blackwell's theorem is stated in full in \cref{sec:disc:accurate} below.)

This familiar intuition is invalid in our model, because extra information also has the secondary effect of partially correcting any prior misperception $q \neq p$. The definition of the instrumental component $\mathcal{I}_A(p,q,\langle S, \pi \rangle ,\langle S', \pi' \rangle )$ shuts down this effect by holding perceptions fixed, thereby restoring the usual non-negativity of the instrumental value of information (part~\ref{th:decomp:instr}).

The secondary effect is isolated in the perception-correcting component, in which task assignment is held fixed. This term is zero if the prior perception $q$ is accurate, i.e. equal to the true distribution. For an under-perceived population, the perception-correcting effect is non-negative (part~\ref{th:decomp:corr_pos}) because as more information becomes available, firms recognize that the skill distribution is better than their prior perception $q$ suggested, leading them to pay workers more. The same logic explains why the perception-correcting term is non-positive for an over-perceived population (part~\ref{th:decomp:corr_neg}). These intuitions are incomplete because they do not exploit the theorem's monotonicity and MLR assumptions, without which the result may fail, as we show in \cref{sec:decomp:mlr_mon} below.

More abstractly, the force behind the perception-correcting term is that posterior beliefs about workers' skill are (bary)centered somewhere between $q$ and $p$, and that this point is further from $q$ and closer to $p$ the more informative is the signal structure (it is $q$ under no information, and $p$ under full information). This reflects the fact that Bayesians are not dogmatic, but rather update beliefs about a population in light of the evidence.

The idea of a perception-correcting effect has appeared in various contexts, such as statistics,%
\footnote{See \textcite{Laplace1774,Wald1947,Girsanov1960,BlackwellDubins1962}.}
persuasion with heterogeneous priors \parencite[e.g.][]{AlonsoCamara2016,OnuchicRay2023}, and behavioral decision theory \parencite{Bordoli2025}. Bordoli's paper in particular distinguishes between ``instrumental'' and ``non-instrumental'' effects of information. Closest to \Cref{th:decomp}\ref{th:decomp:corr_pos}--\ref{th:decomp:corr_neg} is the result of \textcite{KartikLeeSuen2021}, which formalizes a sense in which posterior beliefs become (bary)centered further from $q$ and closer to $p$ as the signal structure becomes more informative. This result formally coincides with the special case of \Cref{th:decomp}\ref{th:decomp:corr_pos}--\ref{th:decomp:corr_neg} in which only single-task firms $A=\{a\}$ are considered, and the proofs are accordingly similar.

\subsection{Signing the total impact of new information}
\label{sec:decomp:total}

\Cref{th:decomp} can sometimes be used to determine the sign of the total impact $W_A(p,q,\langle S', \pi' \rangle ) - W_A(p,q,\langle S, \pi \rangle )$ of new information on average pay.

\begin{corollary}
\label{cor:total}
Fix a skill distribution $p$, a perception $q$, and signal structures $\langle S', \pi' \rangle  \succsim_G \langle S, \pi \rangle $, where $\langle S', \pi' \rangle$ is MLR. If the population is under-perceived ($p \succsim_{LR} q$), then new information increases average pay:
$$W_A(p,q,\langle S', \pi' \rangle )\geq W_A(p,q,\langle S, \pi \rangle )
\quad \text{for any monotone firm $A\subset \mathcal{A}_M$}.$$
\end{corollary}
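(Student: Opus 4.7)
The plan is to derive the corollary as an immediate consequence of \Cref{th:decomp}. Because the corollary inherits exactly the hypotheses needed to invoke all four parts of the theorem for a monotone firm $A \subset \mathcal{A}_M$, there is essentially no extra work to do beyond assembling the pieces.

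Concretely, I would first invoke \Cref{th:decomp}\ref{th:decomp:decomp} to write
\begin{equation*}
W_A(p,q,\langle S', \pi' \rangle) - W_A(p,q,\langle S, \pi \rangle)
= \mathcal{C}_A(p,q,\langle S, \pi \rangle,\langle S', \pi' \rangle)
+ \mathcal{I}_A(p,q,\langle S, \pi \rangle,\langle S', \pi' \rangle),
\end{equation*}
which uses only the Blackwell ordering $\langle S',\pi'\rangle \succsim_G \langle S,\pi\rangle$. Next, I would invoke \Cref{th:decomp}\ref{th:decomp:instr} to conclude that $\mathcal{I}_A(p,q,\langle S, \pi \rangle,\langle S', \pi' \rangle) \geq 0$; this step likewise does not require monotonicity, MLR, or a comparison of $p$ and $q$. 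Finally, since $A$ is monotone, $\langle S',\pi'\rangle$ is MLR, and the population is under-perceived ($p \succsim_{LR} q$), I can apply \Cref{th:decomp}\ref{th:decomp:corr_pos} to conclude that $\mathcal{C}_A(p,q,\langle S, \pi \rangle,\langle S', \pi' \rangle) \geq 0$. Adding the two non-negative summands yields the desired inequality.

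There is no substantive obstacle here: the corollary is purely a packaging result, and all the analytical work has already been done in \Cref{th:decomp}. The only thing worth emphasizing in the write-up is that the MLR assumption on $\langle S',\pi'\rangle$ and monotonicity of $A$ are used \emph{only} to sign the perception-correcting term, while the instrumental term is non-negative for arbitrary firms and signal structures related by $\succsim_G$.
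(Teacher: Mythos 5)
Your proof is correct and is exactly the intended argument: the paper treats \Cref{cor:total} as an immediate consequence of \Cref{th:decomp}\ref{th:decomp:decomp}--\ref{th:decomp:corr_pos}, combining the decomposition with the non-negativity of both components under the stated hypotheses. Your remark about which hypotheses are used where also matches the paper's discussion.
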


\begin{namedthm}[\Cref*{ex:decomp} {\normalfont (continued)}.]
\label{ex:decomp_disc}
As seen in \Cref{fig:2-act-under}, when the population is under-perceived, both the instrumental and perception-correcting terms are positive, so extra information increases average pay.
\end{namedthm}

Conversely, if the population is over-perceived ($q \succsim_{LR} p$), so that the instrumental and perception-correcting components have opposite signs, then the total effect is ambiguous and can be negative, as in \Cref{fig:2-act-under-rev} and in the following example. (This example was discussed in the introduction.)

\begin{example}
\label{ex:reversal}
Consider \Cref{ex:decomp}, except with the single-task (monotone) firm $A = \{\overline{a}\}$. This firm pays each worker their posterior probability of being the high skill type: $w_{\left\{ \overline{a} \right\}}(s'',q,\langle S'', \pi'' \rangle ) = q_{\langle S'', \pi'' \rangle }(1|s'')$ for any signal structure $\langle S'', \pi'' \rangle $ and signal $s'' \in S''$. Hence the change in average pay is
\begin{multline*}
W_{\left\{ \overline{a} \right\}}(p,q,\langle S', \pi' \rangle )
- W_{\left\{ \overline{a} \right\}}(p,q,\langle S, \pi \rangle )
\\
= \left[ p(0) q_{\langle S', \pi' \rangle }\left(1\middle|s^0\right)
+ p(1) q_{\langle S', \pi' \rangle }\left(1\middle|s^1\right) \right]
- q(1)
= p(1) - q(1) .
\end{multline*}
Thus if the population is strictly over-perceived ($q(1) > p(1)$), then adding information strictly \emph{decreases} average pay.

By \Cref{th:decomp}, such a ``reversal'' can occur only if the perception-correcting component is sufficiently negative to dominate the always non-negative instrumental component. To verify this directly, observe that the instrumental component is zero, since that term captures the use of new information to change task assignment, and there is only one task available:
\begin{multline*}
\mathcal{I}_{\left\{ \overline{a} \right\}}(p,q,\langle S, \pi \rangle ,\langle S', \pi' \rangle )
\\
= \sum_{s \in S} \mu_p(s)
\sum_{s' \in S'} \mu_p(s'|s)
\sum_{\theta \in \Theta} \mu_{q}(\theta|s,s') \left[ \overline{a}(\theta) - \overline{a}(\theta) \right]
= 0 .
\end{multline*}
The perception-correcting term, on the other hand, is equal to
\begin{multline*}
\mathcal{C}_{\left\{ \overline{a} \right\}}(p,q,\langle S, \pi \rangle ,\langle S', \pi' \rangle )
\\
\begin{aligned}
&= \sum_{s \in S} \mu_p(s)
\sum_{s' \in S'} \left[ \mu_p(s'|s) - \mu_{q}(s'|s) \right]
\sum_{\theta \in \Theta} \mu_{q}(\theta|s,s') \overline{a}(\theta)
\\
&= \sum_{\theta' \in \Theta} \left[ \mu_p\left(s^{\theta'}\right) - \mu_{q}\left(s^{\theta'}\right) \right]
\sum_{\theta \in \Theta} \mu_{q}\left(\theta\middle|s^{\theta'}\right) \overline{a}(\theta)
\\
&= \sum_{\theta' \in \Theta} \left[ p(\theta') - q(\theta') \right]
\overline{a}(\theta') 
= p(1) - q(1) .
\end{aligned}
\end{multline*}
\end{example}

\subsection{The role of the monotonicity and MLR assumptions}
\label{sec:decomp:mlr_mon}

In \Cref{th:decomp}\ref{th:decomp:corr_pos}--\ref{th:decomp:corr_neg}, it is assumed that the firm under consideration is monotone and that the more informative of the two signal structures is MLR. The following two examples illustrate the role of these assumptions by showing how without them, the conclusions of \Cref{th:decomp}\ref{th:decomp:corr_pos}--\ref{th:decomp:corr_neg} may fail.

\begin{example}
\label{ex:mon_fail}
Consider \Cref{ex:reversal}, except with a different firm: let $\underline{a} \in \mathcal{A} \setminus \mathcal{A}_M$ be the task given by $\underline{a}(\theta) \coloneqq 1-\theta$ for each $\theta \in \Theta$, and consider the non-monotone firm $A = \{\underline{a}\}$.%
\footnote{The firm~$\left\{\underline{a}\right\}$ is not monotone since by definition, a ``monotone'' firm~$A$ is one whose every task $a \in A$ is strictly \emph{increasing.}}
This firm pays each worker their posterior probability of being the low skill type: $w_{\left\{ \underline{a} \right\}}(s'',q,\langle S'', \pi'' \rangle ) = q_{\langle S'', \pi'' \rangle }(0|s'')$ for any signal structure $\langle S'', \pi'' \rangle $ and signal $s'' \in S''$. Hence the change in average pay is
\begin{multline*}
W_{\left\{ \underline{a} \right\}}(p,q,\langle S', \pi' \rangle )
- W_{\left\{ \underline{a} \right\}}(p,q,\langle S, \pi \rangle )
\\
= \left[ p(0) q_{\langle S', \pi' \rangle }\left(0\middle|s^0\right)
+ p(1) q_{\langle S', \pi' \rangle }\left(0\middle|s^1\right) \right]
- q(0)
= p(0) - q(0) .
\end{multline*}
The instrumental component $\mathcal{I}_A(p,q,\langle S, \pi \rangle ,\langle S', \pi' \rangle )$ is zero since task assignment does not change (as firm $\left\{ \underline{a} \right\}$ has only one task), so the perception-correcting component is $\mathcal{C}_A(p,q,\langle S, \pi \rangle ,\langle S', \pi' \rangle ) = p(0)-q(0)$ by \Cref{th:decomp}\ref{th:decomp:decomp}. Hence the perception-correcting component is strictly negative if the population is under-perceived ($q(0)>p(0)$) and strictly positive if the population is over-perceived ($q(0)<p(0)$). This shows that the monotonicity hypothesis cannot be dispensed with in \Cref{th:decomp}\ref{th:decomp:corr_pos}--\ref{th:decomp:corr_neg}.
\end{example}

\begin{example}
\label{ex:mlr_fail}
Suppose that skill types are ternary, $\Theta = \{0,1,2\}$, and consider the perception $q$ given by $q(0)=q(1)=1/4$ and $q(2)=1/2$. Let $\langle S, \pi \rangle$ be an uninformative signal structure: $\pi(s|0)=\pi(s|1)=\pi(s|2)$ for every $s \in S$. Let $\langle S', \pi' \rangle$ reveal whether skill is $1$ and nothing else: $\pi'\bigl(s^1\bigm|1\bigr) = \pi'\bigl(s^{0,2}\bigm|0\bigr) = \pi'\bigl(s^{0,2}\bigm|2\bigr) = 1$ for some $s^1 \neq s^{0,2}$ in $S'$.%
\footnote{This simple signal structure is not new; see e.g. the introduction of \textcite{AlonsoCamara2016} and the supplemental appendix of \textcite{KartikLeeSuen2021}.}
By inspection, $\langle S', \pi' \rangle \succ_G \langle S, \pi \rangle$, and the signal structure $\langle S', \pi' \rangle$ is not MLR. Posterior beliefs about skill are $q_{\langle S, \pi \rangle }(\cdot|s) = q(\cdot)$ for every $s \in S$, $q_{\langle S', \pi' \rangle }\left(1\middle|s^1\right) = 1$, $q_{\langle S', \pi' \rangle }\left(0\middle|s^{0,2}\right) = 1/3$ and $q_{\langle S', \pi' \rangle }\left(2\middle|s^{0,2}\right) = 2/3$.

Let $\overline{a} \in \mathcal{A}_M$ be the task given by $\overline{a}(\theta) \coloneqq \theta$ for each $\theta \in \Theta$, and consider the monotone firm $A = \{\overline{a}\}$. This firm pays each worker their posterior expected type: $w_{\left\{ \overline{a} \right\}}(s'',q,\langle S'', \pi'' \rangle ) = q_{\langle S'', \pi'' \rangle }(1|s'') + 2 q_{\langle S'', \pi'' \rangle }(2|s'')$ for any signal structure $\langle S'', \pi'' \rangle $ and signal $s'' \in S''$. Hence the change in average pay is
\begin{multline*}
W_{\left\{ \overline{a} \right\}}(p,q,\langle S', \pi' \rangle )
- W_{\left\{ \overline{a} \right\}}(p,q,\langle S, \pi \rangle )
\\
= \left[ p(1) \cdot 1 + (1-p(1)) \cdot \left( \frac{1}{3} \cdot 0 + \frac{2}{3} \cdot 2 \right) \right]
- \left[ \frac{1}{4} \cdot 1 + \frac{1}{2} \cdot 2 \right]
= \frac{1}{3} \left( \frac{1}{4}-p(1) \right) .
\end{multline*}
The instrumental component $\mathcal{I}_A(p,q,\langle S, \pi \rangle ,\langle S', \pi' \rangle )$ is zero since task assignment does not change (as firm $\left\{ \overline{a} \right\}$ has only one task), so the perception-correcting term is $\mathcal{C}_A(p,q,\langle S, \pi \rangle ,\langle S', \pi' \rangle ) = (1/4-p(1))/3$ by \Cref{th:decomp}\ref{th:decomp:decomp}.

Let $p(0)=1/4 - 3 \delta$, $p(1) = 1/4 + \delta$ and $p(2) = 1/2 + 2 \delta$, where $\delta \in (-1/4,1/12)$. If $\delta>0$, then the population is under-perceived ($p \succsim_{LR} q$), but $\mathcal{C}_A(p,q,\langle S, \pi \rangle ,\langle S', \pi' \rangle ) < 0$. This shows that the MLR hypothesis cannot be dispensed with in \Cref{th:decomp}\ref{th:decomp:corr_pos}. If $\delta<0$, then the population is over-perceived ($q \succsim_{LR} p$), but $\mathcal{C}_A(p,q,\langle S, \pi \rangle ,\langle S', \pi' \rangle ) > 0$, so the MLR hypothesis cannot be dispensed with in \Cref{th:decomp}\ref{th:decomp:corr_neg}, either.
\end{example}

\section{Implications for statistical discrimination}
\label{sec:disc}

In this section, we explore the implications of our decomposition result (\Cref{th:decomp}) for pay gaps between different populations. We focus on \emph{statistical discrimination,} meaning gaps in average pay between populations that have the same (true) skill distribution.

One message of \Cref{th:decomp} is that under some conditions, a more informative population will earn more on average. \Cref{le:favor} says that a more favorably perceived population is paid more on average. Combining these two insights yields the following prediction about statistical discrimination:

\begin{corollary}
\label{cor:disc}
Consider two populations $I$ and $J$, both with skill distribution~$p$ and with respective perceptions $q_I$ and $q_J$ and signal structures $\langle S_I, \pi_I \rangle $ and $\langle S_J, \pi_J \rangle $. Suppose that $\langle S_I, \pi_I \rangle $ is MLR, and that population~$J$ is under-perceived ($p \succsim_{LR} q_J$). If population~$I$ is both more favorably perceived than and more informative than population~$J$ (i.e. $q_I \succsim_{LR} q_J$ and $\langle S_I, \pi_I \rangle  \succsim_G \langle S_J, \pi_J \rangle $), then monotone firms pay population~$I$ more on average:
$$W_A(p,q_I,\langle S_I, \pi_I \rangle )\geq W_A(p,q_J,\langle S_J, \pi_J \rangle )
\quad \text{for any monotone firm $A\subset \mathcal{A}_M$}.$$
\end{corollary}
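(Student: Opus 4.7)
The plan is to chain two separate comparisons of average pay, using the hybrid quantity $W_A(p,q_J,\langle S_I,\pi_I\rangle)$ as a bridge between $W_A(p,q_I,\langle S_I,\pi_I\rangle)$ and $W_A(p,q_J,\langle S_J,\pi_J\rangle)$. Both links in the chain are one-line invocations of previously established results, so the intellectual content of the argument reduces to picking the correct bridge.

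First, I would hold the signal structure fixed at $\langle S_I,\pi_I\rangle$ and vary the perception from $q_J$ to the more favorable $q_I$. Since $q_I\succsim_{LR}q_J$ by hypothesis, \Cref{le:favor}\ref{le:favor:suff} applied to any monotone firm $A\subset\mathcal{A}_M$ yields
$$W_A(p,q_I,\langle S_I,\pi_I\rangle)\;\geq\;W_A(p,q_J,\langle S_I,\pi_I\rangle).$$

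Second, I would hold the perception fixed at $q_J$ and improve the signal structure from $\langle S_J,\pi_J\rangle$ to the more informative $\langle S_I,\pi_I\rangle$. The hypotheses of \Cref{cor:total} line up exactly with what is assumed: $\langle S_I,\pi_I\rangle\succsim_G\langle S_J,\pi_J\rangle$, the signal structure $\langle S_I,\pi_I\rangle$ is MLR, and the population is under-perceived relative to $q_J$ (i.e.\ $p\succsim_{LR}q_J$). Hence for any monotone firm $A\subset\mathcal{A}_M$,
$$W_A(p,q_J,\langle S_I,\pi_I\rangle)\;\geq\;W_A(p,q_J,\langle S_J,\pi_J\rangle).$$
Combining the two inequalities gives the conclusion.

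The only point requiring thought is the choice of bridge. Routing through $(q_J,\langle S_I,\pi_I\rangle)$ works because \Cref{cor:total} requires both the MLR property of the more informative structure (which we possess only for $\langle S_I,\pi_I\rangle$) and an under-perception hypothesis (which we possess only with respect to $q_J$). The alternative bridge $(q_I,\langle S_J,\pi_J\rangle)$ would require invoking \Cref{cor:total} with the \emph{less} informative structure in the MLR role and with respect to the perception $q_I$, for neither of which we have a hypothesis; so that route fails. There is no further obstacle: once the right bridge is chosen, the proof is essentially immediate.
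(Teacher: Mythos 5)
Your proof is correct and is essentially the paper's own argument: the paper uses the same bridge $W_A(p,q_J,\langle S_I,\pi_I\rangle)$, handles the perception step with \Cref{le:favor}, and handles the informativeness step by writing out the $\mathcal{C}_A+\mathcal{I}_A$ decomposition of \Cref{th:decomp} and signing each term — which is exactly what your invocation of \Cref{cor:total} packages up. No gap.
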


\begin{proof}
For any monotone firm $A \subset \mathcal{A}_M$, we have
\begin{equation*}
W_A(p,q_I,\langle S_I, \pi_I \rangle )
\geq W_A(p,q_J,\langle S_I, \pi_I \rangle )
\geq W_A(p,q_J,\langle S_J, \pi_J \rangle ) 
\end{equation*}
by \Cref{le:favor} (first inequality) and \Cref{cor:total} (second inequality).
\end{proof}

The key takeaway from \Cref{cor:disc} is that 
greater informativeness and more favorable perception translate straightforwardly into higher average pay \emph{provided} the disfavored population ($J$) is under-perceived ex ante. Outside of that case, a population that is more informative and more favorably perceived may be paid \emph{less} on average, as the following example shows.

\begin{namedthm}[\Cref*{ex:reversal} {\normalfont (continued)}.]
\label{ex:reversal_disc}
Consider two populations, $I$ and $J$, with equal skill type distributions $p_I = p_J = p$ and respective perceptions $q_I$ and $q_J$. Population~$I$ has the fully informative signal structure $\langle S_I, \pi_I \rangle \coloneqq \langle S', \pi' \rangle$, while population~$J$ has the totally uninformative signal structure $\langle S_J, \pi_J \rangle \coloneqq \langle S, \pi \rangle$. By \Cref{th:decomp}\ref{th:decomp:decomp}, the pay gap between the two populations is
\begin{multline*}
W_A(p,q_I,\langle S_I, \pi_I \rangle ) - W_A(p,q_J,\langle S_J, \pi_J \rangle )
\\
\begin{aligned}
&= \bigl[ W_A(p,q_I,\langle S_I, \pi_I \rangle )
- W_A(p,q_J,\langle S_I, \pi_I \rangle ) \bigr]
\\
&\qquad + \mathcal{C}_A(p,q_J,\langle S_J, \pi_J \rangle, \langle S_I, \pi_I \rangle )
\\
&\qquad + \mathcal{I}_A(p,q_J,\langle S_J, \pi_J \rangle, \langle S_I, \pi_I \rangle ) .
\end{aligned}
\end{multline*}
By our previous calculations (see \cpageref{ex:reversal} above), the bracketed term equals $p(1) - p(1) = 0$, the perception-correcting ``$\mathcal{C}_A$'' term equals $p(1) - q_J(1)$, and the instrumental ``$\mathcal{I}_A$'' term is zero, so the total pay gap is $p(1) - q_J(1)$. Thus if population~$J$ is strictly over-perceived ($q_J(1) > p(1)$), then population~$I$ is paid strictly \emph{less} than population~$J$. This holds even if population~$I$ is more favorably perceived ($q_I(1) \geq q_J(1)$).
\end{namedthm}

\subsection{The special case of accurate perceptions}
\label{sec:disc:accurate}

Our focus in this paper is on the interplay of (mis)perceptions and information. In this section, we comment briefly on the special case of accurate perceptions, $q_I = q_J = p$, in which there is no meaningful interaction.

In particular, when perceptions are accurate, the perception-correcting effect is zero for every firm $A \subset \mathcal{A}$:
\begin{multline*}
\mathcal{C}_A(p,p,\langle S_J, \pi_J \rangle ,\langle S_I, \pi_I \rangle )
\\
=
\sum_{s \in S} \mu_p(s)
\sum_{s' \in S'} \left[ \mu_p(s'|s) - \mu_p(s'|s) \right]
\sum_{\theta \in \Theta} \mu_p(\theta|s,s') \widehat{a}_s(\theta)
= 0 .
\end{multline*}
Hence by \Cref{th:decomp}\ref{th:decomp:decomp}--\ref{th:decomp:instr}, 
a more informative population is always paid more when perceptions are accurate. This finding and its converse together constitute Blackwell's theorem on the value of information:%
\footnote{Blackwell calls skill types ``states (of the world),'' tasks ``actions,'' firms ``decision problems,'' and (expected) surplus/pay ``(expected) value/payoff.''}

\begin{theorem}[\cite{Blackwell1951,Blackwell1953}]
\label{th:blackwell}
Consider two populations $I$ and $J$, both with skill distribution~$p$ and with respective perceptions $q_I$ and $q_J$ and signal structures $\langle S_I, \pi_I \rangle $ and $\langle S_J, \pi_J \rangle $. Suppose that perceptions are accurate ($q_I = q_J = p$).
\begin{enumerate}[label=(\alph*)]
\item \label{th:blackwell:proverse}
If $I$ is more informative than $J$ (i.e. $\langle S_I, \pi_I \rangle  \succsim_G \langle S_J, \pi_J \rangle $), then every firm pays population~$I$ more on average:
$$W_A(p,q_I,\langle S_I, \pi_I \rangle )\geq W_A(p,q_J,\langle S_J, \pi_J \rangle )
\quad \text{for every firm $A\subset \mathcal{A}$}.$$
\item \label{th:blackwell:converse}
If $I$ is not more informative than $J$ (i.e. $\langle S_I, \pi_I \rangle  \not\succsim_G \langle S_J, \pi_J \rangle $), then some firm pays population~$I$ strictly less on average:
$$W_A(p,q_I,\langle S_I, \pi_I \rangle )< W_A(p,q_J,\langle S_J, \pi_J \rangle )
\quad \text{for some firm $A\subset \mathcal{A}$}.$$
\end{enumerate}
\end{theorem}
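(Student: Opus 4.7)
Part (a) should be immediate from Theorem 1. Applying Theorem 1(a) with $q = p$ gives
$$W_A(p, p, \langle S_I, \pi_I \rangle) - W_A(p, p, \langle S_J, \pi_J \rangle) = \mathcal{C}_A(p, p, \langle S_J, \pi_J \rangle, \langle S_I, \pi_I \rangle) + \mathcal{I}_A(p, p, \langle S_J, \pi_J \rangle, \langle S_I, \pi_I \rangle).$$
The perception-correcting term vanishes when $q = p$ (since then $\mu_p = \mu_q$, as already observed immediately before the theorem statement), and the instrumental term is non-negative by Theorem 1(b). Notably, neither monotonicity nor MLR is needed, since those hypotheses enter Theorem 1 only to sign the perception-correcting component, which is here zero regardless.

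For part (b), I would argue by contrapositive: assuming $\langle S_I, \pi_I \rangle \not\succsim_G \langle S_J, \pi_J \rangle$, the goal is to exhibit a firm $A \subset \mathcal{A}$ for which population~$I$ earns strictly less on average. The natural route is via the standard reformulation of Blackwell dominance in terms of mean-preserving spreads: under the common prior $p$, each signal structure induces a distribution over posteriors in $\Delta(\Theta)$, and $\langle S_I, \pi_I \rangle \succsim_G \langle S_J, \pi_J \rangle$ is equivalent to $\mathbb{E}_I[\phi] \geq \mathbb{E}_J[\phi]$ for every continuous convex $\phi : \Delta(\Theta) \to \mathbb{R}$. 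If garbling fails, a finite-dimensional separation argument (applied to the closed convex sets of distributions-over-posteriors, viewing them in the finite-dimensional space of moments against continuous functions) produces a continuous convex $\phi$ for which the strict reverse inequality $\mathbb{E}_I[\phi] < \mathbb{E}_J[\phi]$ holds.

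It then remains to realize this separating $\phi$ as the value function of some firm. For any finite $A \subset \mathcal{A}$, the map $\phi_A(\mu) \coloneqq \max_{a \in A} \sum_{\theta \in \Theta} \mu(\theta) a(\theta)$ is piecewise linear and convex on $\Delta(\Theta)$, and conversely every piecewise linear convex function on $\Delta(\Theta)$ has this form for a suitable finite $A$. Since piecewise linear convex functions are uniformly dense in the continuous convex functions on the compact simplex $\Delta(\Theta)$, a small enough approximation to $\phi$ preserves the strict inequality, yielding the desired firm.

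The main obstacle is part (b): the argument relies on the converse half of Blackwell's theorem, which requires a finite-dimensional separation step plus the density of piecewise-linear convex functions. Given the classical status of this converse, an alternative is simply to cite \textcite{Blackwell1951,Blackwell1953} for part (b), in which case the paper's contribution is to recover the proverse direction (a) directly from the decomposition in Theorem 1.
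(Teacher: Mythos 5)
Your proof of part~(a) is exactly the paper's own argument: the text preceding the theorem derives it from \Cref{th:decomp}\ref{th:decomp:decomp}--\ref{th:decomp:instr} by noting that $\mathcal{C}_A(p,p,\cdot,\cdot)=0$, with no monotonicity or MLR needed. For part~(b) the paper offers no proof and simply attributes the converse to \textcite{Blackwell1951,Blackwell1953}, which matches the fallback you yourself propose (and your sketched separation-plus-piecewise-linear-approximation argument is the standard route to that classical result).
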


\textcite{ChambersEchenique2021} introduced the model that we employ in this paper, and used it to study statistical discrimination in the accurate-perceptions case. In particular, they sought to characterize the following strong ``no-discrimination'' property: ``every firm $A \subset \mathcal{A}$ pays populations~$I$ and $J$ the same on average, i.e. $W_A(p,p,\langle S_I, \pi_I \rangle) = W_A(p,p,\langle S_J, \pi_J \rangle)$.'' They obtained two geometric characterizations, proved using Choquet theory.

An alternative characterization can be obtained from \hyperref[th:blackwell]{Blackwell's theorem}, which immediately implies that ``no-discrimination'' holds if and only if $\langle S_I, \pi_I \rangle  \succsim_G \langle S_J, \pi_J \rangle \succsim_G \langle S_I, \pi_I \rangle $, meaning that the two populations' signal structures are informationally equivalent (i.e. they convey exactly the same information about skill).

\section{When does information narrow pay gaps?}
\label{sec:gap}

The spirit of the perception-correcting effect is that informative signals diminish the importance for average pay of prior misperceptions. This suggests that the pay gap between two differently-perceived but equally-skilled populations narrows as the informativeness of signals improves. \textcite[][Proposition~1]{BohrenImasRosenberg2019} showed that this intuition is correct in their model of subjective evaluation, which can be thought of as a single-task linear--Gaussian special case of our model. (We describe their model and result in more detail in \Cref{re:bir} below. Their paper contains two further propositions, and its main focus is on empirically testing these propositions.)

In this section, we explore the validity of this ``information narrows pay gaps'' intuition beyond the single-task linear--Gaussian special case, obtaining results that generalize and qualify the finding of \textcite{BohrenImasRosenberg2019}. We focus on the case in which the two populations~$I$ and $J$ have the same signal structure $\langle S, \pi \rangle $, differing only in their perceptions $q_I$ and $q_J$. Our question is under what conditions the pay gap narrows when the populations' common signal structure changes from $\langle S, \pi \rangle $ to a more informative signal structure $\langle S', \pi' \rangle $.

We identify two such conditions. The first, discussed in \cref{sec:gap:slight}, is when the change from $\langle S, \pi \rangle $ to $\langle S', \pi' \rangle $ is small (in particular, small enough not to affect task assignment). The second, discussed in \cref{sec:gap:full}, is when $\langle S', \pi' \rangle $ is close to fully informative. Both conditions are stringent, and we show that when they are not satisfied, extra information may \emph{widen} the pay gap.

These findings have implications for the effectiveness of interventions that seek to reduce labor-market inequality by increasing information availability, such as testing regulations,%
\footnote{For example, Title VII of the 1964 Civil Rights Act and its implementing regulations (Uniform Guidelines on Employee Selection Procedures, 29 C.F.R. § 1607, 1978) require employers to use only tests and selection procedures that have been scientifically validated (or have no disparate impact), incentivizing the use of objective standardized selection measures rather than subjective criteria.}
transparency requirements,%
\footnote{For example, \textcite{CuiLiZhang2020,LaouenanRathelot2022} evaluate how providing extra information impacts discrimination on Airbnb. For lab experiments on such information interventions, see the survey by \textcite{LitwinLow2025}.}
and the introduction of algorithmic screening tools.%
\footnote{For example, \textcite{CowgillTucker2020} discuss how algorithmic screening may alleviate discrimination.}
Our findings suggest that such interventions may either succeed or backfire: it depends on perceptions and on how extra information reshapes task assignment.

\subsection{Slightly increased informativeness}
\label{sec:gap:slight}

We call a change in informativeness \emph{slight} if it is small enough that it does not change the surplus-maximizing task assignment. The formal definition is as follows.

\begin{definition}
\label{def:slight}
Fix a firm $A \subset \mathcal{A}$, a perception $q$, and signal structures $\langle S', \pi' \rangle \succsim_G \langle S, \pi \rangle $, and let $g$ be a garbling kernel from $\langle S',\pi' \rangle $ to $\langle S,\pi \rangle $. We say that $\langle S',\pi' \rangle $ is \emph{slightly more informative than} $\langle S,\pi \rangle $ for firm~$A$ at perception $q$ if $\widehat{a}_s = \widehat{a}'_{s'}$ for all signals $s \in S$ and $s' \in S'$ such that $g(s|s')>0$.
\end{definition}

By definition, slight increases of informativeness leave task assignment unchanged, so that the instrumental component $\mathcal{I}_A(p,q,\langle S, \pi \rangle ,\langle S', \pi' \rangle )$ of the change of average pay is zero. Note that for single-task firms $A = \{a\}$, \emph{all} increases of informativeness are slight (whatever the perception), since such firms' task assignment is immutable.

\begin{remark}
\label{re:local_suff}
Fix a firm~$A \subset \mathcal{A}$ (a finite set, by definition) and a perception $q$. Pay as a function of the posterior belief, $r \mapsto \max_{a \in A} \sum_{\theta \in \Theta} r(\theta) a(\theta)$, is locally affine at almost every belief $r' \in \Delta(\Theta)$.%
\footnote{By ``locally affine at $r'$,'' we mean that there exists a neighborhood of $r'$ on which pay is affine. By ``almost every,'' we mean according to the Lebesgue measure on $\Delta(\Theta)$.}
Hence generic signal structures $\langle S, \pi \rangle $ exclusively generate posterior beliefs at which pay is locally affine.%
\footnote{Formally, for any non-empty finite set $S$, it holds for (Lebesgue-)almost every $\pi : \Theta \times S \to \mathbb{R}$ such that $\langle S, \pi \rangle $ is a signal structure that for each signal $s \in S$, pay is locally affine at the posterior belief $q_{\langle S, \pi \rangle }(\cdot|s) \in \Delta(\Theta)$.}
For such generic signal structures $\langle S, \pi \rangle $, any more-informative signal structure $\langle S',\pi' \rangle $ such that the posterior beliefs $q_{\langle S, \pi \rangle }(\cdot|s)$ and $q_{\langle S', \pi' \rangle }(\cdot|s')$ are always sufficiently close is slightly more informative than $\langle S, \pi \rangle $ at perception~$q$.%
\footnote{Explicitly: there exists an $\varepsilon > 0$ such that if $\langle S',\pi' \rangle \succsim_G \langle S,\pi \rangle $ and $\lvert q_{\langle S', \pi' \rangle }(\theta|s') - q_{\langle S, \pi \rangle }(\theta|s) \rvert \leq \varepsilon$ for all $\theta \in \Theta$ and all pairs $(s,s') \in S \times S'$ with $\mu_q(s,s') > 0$, then $\langle S',\pi' \rangle $ is slightly more informative than $\langle S,\pi \rangle $ for firm~$A$ at perception~$q$.}
\end{remark}

The following ``positive'' result identifies (admittedly stringent) conditions under which extra information does indeed narrow the pay gap.

\begin{proposition}
\label{prop:slight}
Fix a firm $A \subset \mathcal{A}$ and two populations $I$ and $J$, both with skill distribution~$p$, signal structure $\langle S_I,\pi_I\rangle = \langle S_J,\pi_J\rangle = \langle S,\pi\rangle$, and respective perceptions $q_I$ and $q_J$. Suppose that population~$I$ is more favorably perceived ($q_I \succsim_{LR} q_J$), and fix a more-informative signal structure $\langle S', \pi' \rangle$ ($\langle S', \pi' \rangle \succsim_G \langle S, \pi \rangle $). Assume that
\begin{enumerate}[label=(\roman*)]
\item \label{prop:slight:mon} the firm $A$ is monotone ($A \subset \mathcal{A}_M$),
\item \label{prop:slight:mlr} the signal structure $\langle S', \pi' \rangle $ is MLR,
\item \label{prop:slight:over} population~$I$ is over-perceived ($q_I \succsim_{LR} p$),
\item \label{prop:slight:under} population~$J$ is under-perceived ($p \succsim_{LR} q_J$), and
\item \label{prop:slight:slight} $\langle S', \pi' \rangle $ is slightly more informative than $\langle S, \pi \rangle $ at both $q_I$ and $q_J$.
\end{enumerate}
Then the extra information narrows the pay gap:
\begin{align}
\nonumber
& W_A(p,q_I,\langle S',\pi' \rangle )-W_A(p,q_J,\langle S',\pi' \rangle )
\\
\tag{$\star$}\label{eq:narrow}
\leq{}& W_A(p,q_I,\langle S,\pi \rangle )-W_A(p,q_J,\langle S,\pi \rangle ) .
\end{align}
\end{proposition}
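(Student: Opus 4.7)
The plan is to apply the decomposition of \Cref{th:decomp} separately to each of the two populations, exploit slight informativeness to kill the instrumental components, and then sign the perception-correcting components with opposite signs. First I would rewrite the target inequality \eqref{eq:narrow} as
$$\Delta W_I \leq \Delta W_J,
\qquad\text{where } \Delta W_k := W_A(p,q_k,\langle S',\pi'\rangle) - W_A(p,q_k,\langle S,\pi\rangle)
\text{ for } k \in \{I,J\}.$$
By \Cref{th:decomp}\ref{th:decomp:decomp}, applied once to each population (with its own perception, but the common pair of signal structures $\langle S,\pi\rangle \preceq_G \langle S',\pi'\rangle$), each side decomposes as
$$\Delta W_k = \mathcal{C}_A(p,q_k,\langle S,\pi\rangle,\langle S',\pi'\rangle) + \mathcal{I}_A(p,q_k,\langle S,\pi\rangle,\langle S',\pi'\rangle).$$

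Next I would use assumption \ref{prop:slight:slight}, read as holding at each of $q_I$ and $q_J$, to eliminate both instrumental terms. Per the discussion following \Cref{def:slight}, slight informativeness admits choices of the maximizers $\widehat{a}_s$ and $\widehat{a}'_{s'}$ that coincide on every pair $(s,s')$ with $g(s|s')>0$, so both formulas for $\mathcal{I}_A$ then vanish term by term. With $\mathcal{I}^I = \mathcal{I}^J = 0$ under suitably chosen argmax selections, the target reduces to $\mathcal{C}_A(p,q_I,\cdot,\cdot) \leq \mathcal{C}_A(p,q_J,\cdot,\cdot)$.

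To sign each perception-correcting term, I would appeal to \Cref{th:decomp}\ref{th:decomp:corr_pos}--\ref{th:decomp:corr_neg} under assumptions \ref{prop:slight:mon} ($A$ monotone) and \ref{prop:slight:mlr} ($\langle S',\pi'\rangle$ MLR). Assumption \ref{prop:slight:over} says population $I$ is over-perceived, so \Cref{th:decomp}\ref{th:decomp:corr_neg} gives $\mathcal{C}_A(p,q_I,\langle S,\pi\rangle,\langle S',\pi'\rangle) \leq 0$; assumption \ref{prop:slight:under} says population $J$ is under-perceived, so \Cref{th:decomp}\ref{th:decomp:corr_pos} gives $\mathcal{C}_A(p,q_J,\langle S,\pi\rangle,\langle S',\pi'\rangle) \geq 0$. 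Combining with the vanishing of both instrumentals yields $\Delta W_I \leq 0 \leq \Delta W_J$, i.e.\ \eqref{eq:narrow}. The hypothesis $q_I \succsim_{LR} q_J$ stated alongside the proposition is in fact redundant, being implied by \ref{prop:slight:over} and \ref{prop:slight:under} via transitivity of $\succsim_{LR}$, and it plays no explicit role in the argument.

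The main obstacle is the first step of the middle paragraph: rigorously justifying that slight informativeness permits a \emph{simultaneous} argmax selection $\{\widehat{a}_s\}_{s \in S}$, $\{\widehat{a}'_{s'}\}_{s' \in S'}$ that makes $\mathcal{I}_A$ vanish. \Cref{def:slight} guarantees the pairwise intersection property, and in the generic case highlighted by \Cref{re:local_suff} the argmaxes at each posterior are singletons, so a compatible joint selection is forced and the instrumental components are literally zero. Outside this generic case, one has to pick a common task in each connected component of the graph on $S \sqcup S'$ defined by $\{(s,s') : g(s|s')>0\}$; this is the only step that calls for care beyond direct citation of the earlier machinery.
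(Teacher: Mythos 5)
Your argument coincides with the paper's own proof: both apply \Cref{th:decomp}\ref{th:decomp:decomp} to each population, use assumption~\ref{prop:slight:slight} to set both instrumental components to zero, and then invoke \Cref{th:decomp}\ref{th:decomp:corr_neg} for population~$I$ (via \ref{prop:slight:mon}--\ref{prop:slight:over}) and \Cref{th:decomp}\ref{th:decomp:corr_pos} for population~$J$ (via \ref{prop:slight:mon}, \ref{prop:slight:mlr} and \ref{prop:slight:under}) to conclude that the perception-correcting term for $I$ is non-positive and that for $J$ is non-negative. The selection subtlety you flag at the end is genuine but is not treated in the paper either---the appendix simply asserts that \Cref{def:slight} forces the instrumental components to vanish---so your proposal is, if anything, more careful on that point (though note that pairwise non-empty intersections need not yield a single task optimal at $s$ and at \emph{every} $s'$ with $g(s|s')>0$, so the connected-component fix you sketch would still need an argument).
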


None of the assumptions in \Cref{prop:slight} can be dispensed with: if any one of them fails, then extra information may \emph{widen} the pay gap, as the following result shows.

\begin{proposition}
\label{prop:tight}
Consider tuples $( A, p, q_I, q_J, \langle S, \pi \rangle , \langle S', \pi' \rangle )$ comprising a firm $A \subset \mathcal{A}$, a skill distribution $p$, perceptions $q_I$ and $q_J$ satisfying $q_I \succsim_{LR} q_J$, and signal structures $\langle S, \pi \rangle $ and $\langle S', \pi' \rangle $ satisfying $\langle S', \pi' \rangle \succsim_G \langle S, \pi \rangle $. For each of the properties~\ref{prop:slight:mon}--\ref{prop:slight:slight} in \Cref{prop:slight}, there exist tuples which violate that property, satisfy all four of the other properties, and violate~\eqref{eq:narrow}.
\end{proposition}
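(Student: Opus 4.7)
The plan is to exhibit, for each of the five conditions (i)--(v), an explicit tuple $(A, p, q_I, q_J, \langle S, \pi\rangle, \langle S', \pi'\rangle)$ that violates the corresponding condition but satisfies the other four, and for which the pay gap $W_A(p, q_I, \langle S, \pi\rangle) - W_A(p, q_J, \langle S, \pi\rangle)$ strictly increases when $\langle S, \pi\rangle$ is replaced by $\langle S', \pi'\rangle$. The workhorse throughout is the decomposition of \Cref{th:decomp}\ref{th:decomp:decomp}, which writes the change in the pay gap as $(\mathcal{C}_I + \mathcal{I}_I) - (\mathcal{C}_J + \mathcal{I}_J)$, where $\mathcal{C}_K$ and $\mathcal{I}_K$ denote the perception-correcting and instrumental components for population~$K$.

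For (i) and (ii), the required tuples extend Examples \ref{ex:mon_fail} and \ref{ex:mlr_fail}, which already show that monotonicity and MLR cannot be dropped from the sign statements in \Cref{th:decomp}\ref{th:decomp:corr_pos}--\ref{th:decomp:corr_neg}. For (i), I take the non-monotone single-task firm $\{\underline a\}$ of \Cref{ex:mon_fail} in binary types, uninformative $\langle S, \pi\rangle$ and fully informative $\langle S', \pi'\rangle$; with $q_I$ over-perceived and $q_J$ under-perceived the pay gap shifts from the negative value $q_I(0) - q_J(0)$ to $0$, i.e., increases strictly. For (ii), I use the non-MLR ternary signal structure of \Cref{ex:mlr_fail} with the monotone single-task firm $\{\bar a\}$; perceptions can be chosen (with $q_I$ over-perceived and $q_J$ under-perceived) so that the closed-form $\mathcal{C}(q) = (p(1) - q(1))(q(0) - q(2))/(q(0) + q(2))$ derived there yields $\mathcal{C}_I > \mathcal{C}_J$, violating the expected monotonicity in LR favorability.

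For (iii), I construct a binary example with a monotone two-task firm $A = \{a_1, a_2\}$ whose tasks have very different skill-sensitivities (small and large ``slopes'' $a_k(1) - a_k(0)$). Both populations are under-perceived, with $q_I$ less severely so (hence (iii) fails while (iv) holds); parameters are chosen so that the firm selects the high-slope task for~$q_I$ and the low-slope task for~$q_J$, and a weak symmetric MLR signal~$\langle S', \pi'\rangle$ leaves all posteriors on their respective sides of the threshold (preserving (v) slight). Scaling the perception-correcting terms by their respective task slopes reverses their relative magnitudes, giving $\mathcal{C}_I > \mathcal{C}_J > 0$ and hence widening. For (v), I use a similar binary two-task firm whose switching threshold lies between a barely-over-perceived $q_I$'s prior and its posterior at one signal realization, and comfortably above a well-under-perceived $q_J$'s prior and posteriors. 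Under~$q_J$, task assignment is stable (slight for~$J$), while under~$q_I$ the task switches at that signal (so (v) fails); the resulting positive instrumental component $\mathcal{I}_I > 0 = \mathcal{I}_J$ dominates the small perception-correcting narrowing, producing widening.

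The hard case is (iv): with a monotone firm and an MLR signal, \Cref{th:decomp}\ref{th:decomp:corr_pos}--\ref{th:decomp:corr_neg} combined with $q_I \succsim_{LR} q_J$ pushes strongly toward narrowing whenever both populations are biased in the same direction, so no binary construction can deliver widening. Instead I take ternary types and exploit the possibility that $q_J$ is LR-incomparable to~$p$ (rather than simply over-perceived): with the monotone single-task firm $\{\bar a\}$ and a suitably tuned MLR signal, a $q_J$ whose mass concentrates on the extremes of~$\Theta$ (so $q_J(2)/q_J(1) > p(2)/p(1)$ while $q_J(1)/q_J(0) < p(1)/p(0)$) can receive a more negative perception-correcting term than a steeply-over-perceived but LR-dominating $q_I$, yielding $\mathcal{C}_I > \mathcal{C}_J$ and hence widening.
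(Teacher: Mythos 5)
Your constructions for properties~\ref{prop:slight:mon} and \ref{prop:slight:mlr} coincide with the paper's (reuse of \Cref{ex:mon_fail} and \Cref{ex:mlr_fail} with perceptions chosen so that \ref{prop:slight:over} and \ref{prop:slight:under} hold), and your constructions for \ref{prop:slight:over} and \ref{prop:slight:slight} are different from but in the same spirit as the paper's: for \ref{prop:slight:over} the paper uses a \emph{single-task} binary example with both populations under-perceived, exploiting non-monotonicity of the perception-correcting gain in $q(1)$, whereas you amplify the correction via differing task slopes; for \ref{prop:slight:slight} both you and the paper let a task switch generate an instrumental gain for one population only. These routes are plausible, though as sketches they still owe explicit parameter choices verifying all five constraints simultaneously --- for an existence claim the verified numbers \emph{are} the proof.

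The genuine gap is in case~\ref{prop:slight:under}. Your premise that ``no binary construction can deliver widening'' when both populations are biased in the same direction is false, and it is contradicted both by the paper's proof and by your own case~\ref{prop:slight:over}. \Cref{th:decomp}\ref{th:decomp:corr_pos}--\ref{th:decomp:corr_neg} only \emph{sign} the perception-correcting components; they say nothing about how their magnitudes vary with the prior, and that magnitude is not monotone in the degree of misperception. The paper's counterexample for \ref{prop:slight:under} is exactly the mirror image of its counterexample for \ref{prop:slight:over}: binary types, the single-task monotone firm $\{\overline{a}\}$, an uninformative $\langle S,\pi\rangle$ garbled from a symmetric MLR $\langle S',\pi'\rangle$ with accuracy $3/4$, and $p(1)=1/4$, $q_J(1)=3/4$, $q_I(1)=5/6$, so that \emph{both} populations are over-perceived, \ref{prop:slight:over} holds, \ref{prop:slight:under} fails, and the (negative) perception-correcting loss is strictly \emph{smaller in magnitude} for the more over-perceived population~$I$ ($-35/384$ versus $-1/10$), so the gap widens. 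Your substitute --- ternary types with $q_J$ LR-incomparable to $p$ --- is not obviously wrong, but it rests on a mistaken impossibility claim and is the least pinned-down of your five constructions: you would need to exhibit a concrete $(p,q_I,q_J,\langle S',\pi'\rangle)$ with $q_I\succsim_{LR}p$, $q_I\succsim_{LR}q_J$, $p\not\succsim_{LR}q_J$, $q_J\not\succsim_{LR}p$ and verify $\mathcal{C}_I>\mathcal{C}_J$ by computation before this case can be considered proved.
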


The role of property~\ref{prop:slight:slight} is to ensure that the instrumental effect of extra information is zero for both populations, so that the change in the pay gap equals the difference between the two populations' perception-correcting effects. The pay gap then narrows if the perception-correcting effect is negative for population~$I$ and positive for population~$J$. The role of properties~\ref{prop:slight:over} and \ref{prop:slight:under} is to ensure that this is the case. Formally, signing the perception-correcting terms in this way also requires properties~\ref{prop:slight:mon} and \ref{prop:slight:mlr}, as explained in section \ref{sec:decomp:mlr_mon} above.

To see why property~\ref{prop:slight:slight} is necessary, consider the following tuple. Skill types are binary, $\Theta = \{0,1\}$. The skill distribution and perceptions are $p(1)=1/2$, $q_I(1)=3/4$ and $q_J(1)=1/4$, so that population~$I$ is over-perceived and population~$J$ is under-perceived. We consider the monotone firm $A = \{\overline{a},\widetilde{a}\}$, where $\overline{a}(\theta) \coloneqq \theta$ and $\widetilde{a}(\theta) \coloneqq 4(2\theta-1)$ for each $\theta \in \Theta$. For each $\lambda \in [1/2,1]$, let $\langle S_\lambda, \pi_\lambda \rangle $ be the MLR signal structure given by $S_\lambda = \bigl\{s^0,s^1\bigr\}$ and $\pi_\lambda\bigl(s^0\bigm|0\bigr)=\pi_\lambda\bigl(s^1\bigm|1\bigr)=\lambda$. Note that a higher value of $\lambda$ corresponds to greater informativeness. Each population's average pay is plotted in \Cref{fig:tight-slight} as a function of $\lambda$.%
\footnote{The kinks in \Cref{fig:tight-slight} correspond to changes in task assignment. In particular, in population~$I$, workers with signal $s^1$ are assigned to task $\widetilde{a}$ whatever the value of $\lambda$, while those with signal $s^0$ are assigned to $\widetilde{a}$ if $\lambda < 9/13$ and to $\overline{a}$ if $\lambda > 9/13$, and in population~$J$, workers with signal $s^0$ are assigned to task $\overline{a}$ whatever the value of $\lambda$, while those with signal $s^1$ are assigned to $\overline{a}$ if $\lambda < 4/5$ and to $\widetilde{a}$ if $\lambda > 4/5$.}
The pay gap is not decreasing: there are $\lambda'>\lambda$ in $[1/2,1]$ such that the pay gap is greater at $\lambda'$ than at $\lambda$.

\begin{figure}
\centering
\begin{tikzpicture}[scale=1, line cap=round,
declare function={post_lo(\x,\q)= \q * \x / ( (1-\q) * (1-\x) + \q * \x ); post_hi(\x,\q)= \q * (1-\x) / ( (1-\q) * \x + \q * (1-\x) ); pay(\x,\q,\K) = 0.5 * max( post_lo(\x,\q), \K*(2*post_lo(\x,\q)-1) ) + 0.5 * max( post_hi(\x,\q), \K*(2*post_hi(\x,\q)-1) );}]
    \pgfmathsetmacro{\xscale}{15};
    \pgfmathsetmacro{\K}{4};
    \pgfmathsetmacro{\yscale}{8/\K};
    \pgfmathsetmacro{\kinkI}{ (3/4)*(1-(\K/(2*\K-1))) / ( (3/4) + (1-2*(3/4))*(\K/(2*\K-1)) ) }; 
    \pgfmathsetmacro{\kinkJ}{ ( (1-(1/4))*(\K/(2*\K-1) ) / ( (1/4) + (1-2*(1/4))*(\K/(2*\K-1) ) }; 
    \draw[domain=0:0.5, variable=\x, samples=100, very thick] plot ( {\xscale*(1-\x)}, { \yscale*pay(\x,0.75,\K) } );
    \draw ( {\xscale*(1-0.4)}, {\yscale*(pay(0.4,0.75,\K)+0.05)} ) node[anchor=south] {population $I$};
    \draw[domain=0:0.5, variable=\x, samples=100, very thick,] plot ( {\xscale*(1-\x)}, { \yscale*pay(\x,0.25,\K) } );
    \draw ( {\xscale*(1-0.4)}, { \yscale*pay(0.4,0.25,\K) } ) node[anchor=south] {population $J$};
    \draw[domain=0:0.5, variable=\x, samples=100, very thick, dashed] plot ( {\xscale*(1-\x)}, { \yscale*(pay(\x,0.75,\K)-pay(\x,0.25,\K)) } );
    \draw ( {\xscale*(1-0.03)}, {\yscale*(pay(0.03,0.75,\K)-pay(0.03,0.25,\K))} ) node[anchor=south west] {gap};
    \draw[<-] ({\xscale*0.5-0.8},{\yscale*pay(0.4,0.75,\K)+0.5}) -- ({\xscale*0.5-0.8},{1*0})-- ({\xscale*0.5-0.55},{1*0});
    \draw[-] ({\xscale*0.5-0.45},{1*(-0.15)}) -- ({\xscale*0.5-0.35},{1*0.15});
    \draw[-] ({\xscale*0.5-0.60},{1*(-0.15)}) -- ({\xscale*0.5-0.50},{1*0.15});
    \draw ({\xscale*1+0.5},{1*0}) node[anchor=west] {$\lambda$};
    \draw[->] ({\xscale*0.5-0.40},{1*0}) -- ({\xscale*0.5-0.4},{1*0}) -- ({\xscale*1+0.5},{1*0});
    \draw[-] ({\xscale*0.5},{1*0.07}) -- ({\xscale*0.5},{1*(-0.07)});
    \draw ({\xscale*1+0.5},{1*0}) node[anchor=west] {$\lambda$};
    \draw[-] ({\xscale*0.5},{1*0.07}) -- ({\xscale*0.5},{1*(-0.07)});
    \draw ({\xscale*0.5},{1*0}) node[anchor=north] {$\strut 1/2$};
    \draw[-] ({\xscale*\kinkI},{1*0.07}) -- ({\xscale*\kinkI},{1*(-0.07)});
    \draw ({\xscale*\kinkI},{1*0}) node[anchor=north] {$\strut 9/13$};
    \draw[-] ({\xscale*\kinkJ},{1*0.07}) -- ({\xscale*\kinkJ},{1*(-0.07)});
    \draw ({\xscale*\kinkJ},{1*0}) node[anchor=north] {$\strut 4/5$};    
    \draw ({\xscale*1},{1*0}) node[anchor=north] {$\strut 1$};
\end{tikzpicture}
\caption{Average pay of populations~$I$ and $J$ and the gap between them in the tuple used to prove the indispensability of property~\ref{prop:slight:slight}.}
\label{fig:tight-slight}
\end{figure}

Taken together, \Cref{prop:slight,prop:tight} show that whether additional information will narrow the pay gap very much depends.

\begin{remark}
\label{re:bir}
In this section, where we consider only slight increases of informativeness (which leave task assignment unchanged), little insight is lost by focusing on single-task firms $A=\{a\}$. This special case is close to the model of \textcite{BohrenImasRosenberg2019}, which aims to describe subjective evaluation rather than pay, and accordingly features no production choices such as task assignment. In these authors' model, it is additionally assumed that the primitives $a$, $p$, $q$ and $\langle S, \pi \rangle $ are linear--Gaussian.%
\footnote{To be precise: $\Theta=S=\mathbb{R}$, $a(\theta)=\theta$ for every $\theta \in \Theta$, and $(\theta,s) \mapsto p(\theta) \pi(s|\theta)$ and $(\theta,s) \mapsto q(\theta) \pi(s|\theta)$ are bi-variate Gaussian probability density functions.}
In our language, Bohren, Imas and Rosenberg's first result implies that in the single-task linear--Gaussian case, extra information narrows the pay gap. Our \Cref{prop:slight,prop:tight} generalize and qualify this finding; for example, without the specific linear--Gaussian structure, extra information may \emph{widen} the pay gap unless population~$I$ is over-perceived and population~$J$ is under-perceived.
\end{remark}

\begin{namedthm}[\Cref*{re:unit} {\normalfont(continued from \cpageref{re:unit})}.]
\label{re:unit_gap}
Recall our alternative interpretation of the model, in which $q_I$ and $q_J$ are the two populations' true skill distributions, and a ``like-for-like'' comparison is being made between two sub-populations whose skill distributions are equal, $p_I=p_J=p$. One upshot of \Cref{prop:slight,prop:tight} is that extra information may narrow the pay gap in some sub-populations (those with $q_I \succsim_{LR} p \succsim_{LR} q_J$) and widen it in others (those with $p \succsim_{LR} q_I$ or $q_J \succsim_{LR} p$).

In correspondence studies such as \textcite{BertrandMullainathan2004}, in which the researcher chooses a distribution of CVs to be sent to firms and compares outcomes across two populations, \Cref{prop:slight,prop:tight} imply that whether extra information narrows or widens the measured outcome gap may depend on exactly which CV distribution was chosen by the researcher.
\end{namedthm}

\subsection{Near-full informativeness}
\label{sec:gap:full}

Our second sufficient condition for extra information to narrow the pay gap is for the new, more informative signal structure to be ``nearly fully informative,'' formally defined as follows.

\begin{definition}
\label{def:nearlyfull}
Given $\varepsilon \geq 0$, we say that a signal structure $\langle S,\pi\rangle$ is \emph{within $\varepsilon$ of full information} if for each signal $s \in S$, there exists a skill type $\theta_s \in \Theta$ such that $\pi(s|\theta) \leq \varepsilon \pi(s|\theta_s)$ for every other skill type $\theta \in \Theta \setminus \{\theta_s\}$.%
\footnote{Equivalently: $\min_{\substack{\theta \in \Theta : \pi(s|\theta)>0}} [ \max_{\theta' \in \Theta \setminus \{\theta\}} \pi(s|\theta') / \pi(s|\theta) ] \leq \varepsilon$ for every $s \in S$.}
\end{definition}

The following ``positive'' result shows that when informativeness is increased to near-full, the pay gap narrows.

\begin{proposition}
\label{prop:nearlyfull}
Fix a firm $A \subset \mathcal{A}$ and two populations $I$ and $J$, both with skill distribution~$p$, signal structure $\langle S_I,\pi_I\rangle = \langle S_J,\pi_J\rangle = \langle S,\pi\rangle$, and respective perceptions $q_I$ and $q_J$. Suppose that population~$I$ is more favorably perceived ($q_I \succsim_{LR} q_J$). There exists an $\varepsilon \geq 0$ such that any signal structure $\langle S',\pi' \rangle $ within $\varepsilon$ of full information satisfies \eqref{eq:narrow} on \cpageref{eq:narrow}, that is, yields a narrower pay gap than $\langle S,\pi \rangle $. Furthermore, $\varepsilon$ may be chosen to be strictly positive, except if the right-hand side of \eqref{eq:narrow} is equal to zero.
\end{proposition}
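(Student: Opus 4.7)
The plan is to exploit the fact that when $\langle S', \pi' \rangle$ is within $\varepsilon$ of full information, the posterior belief $q_{\langle S', \pi' \rangle}(\cdot|s')$ is close to the Dirac mass on the ``revealed'' type $\theta_{s'}$, uniformly over the two prior perceptions. Accordingly, each worker's pay is nearly independent of $q$, so the population-level pay gap $W_A(p,q_I,\langle S', \pi' \rangle) - W_A(p,q_J,\langle S', \pi' \rangle)$ is itself $O(\varepsilon)$, hence no larger than the fixed right-hand side of \eqref{eq:narrow} whenever $\varepsilon$ is chosen small enough.

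First I would set $\underline{q} \coloneqq \min_{\theta \in \Theta}\min\{q_I(\theta), q_J(\theta)\} > 0$ (strictly positive by the full-support assumption on both perceptions) and $M \coloneqq \max_{a \in A, \theta \in \Theta} |a(\theta)|$ (finite, since $A$ and $\Theta$ are finite). If $\pi'(s'|\theta) \leq \varepsilon \pi'(s'|\theta_{s'})$ holds for every $\theta \neq \theta_{s'}$, then Bayes's rule yields, for every $q \in \{q_I, q_J\}$ and every $\theta \neq \theta_{s'}$,
\[
q_{\langle S', \pi' \rangle}(\theta|s')
\leq \frac{q(\theta)\pi'(s'|\theta)}{q(\theta_{s'}) \pi'(s'|\theta_{s'})}
\leq \frac{\varepsilon}{\underline{q}} .
\]
Consequently $\bigl|w_A(s',q,\langle S', \pi' \rangle) - \max_{a \in A}a(\theta_{s'})\bigr| \leq 2 M |\Theta| \varepsilon / \underline{q}$ for each $q \in \{q_I, q_J\}$, and averaging against the true distribution of signals gives the uniform bound
\[
\bigl|W_A(p,q_I,\langle S', \pi' \rangle) - W_A(p,q_J,\langle S', \pi' \rangle)\bigr|
\leq C \varepsilon,
\qquad C \coloneqq 4 M |\Theta| / \underline{q} .
\]

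To finish, let $R$ denote the right-hand side of \eqref{eq:narrow}. If $R > 0$, I choose $\varepsilon \coloneqq R/C > 0$; the displayed bound then forces the left-hand side of \eqref{eq:narrow} to be at most $C \varepsilon = R$, yielding the strictly-positive conclusion. If $R = 0$, I take $\varepsilon = 0$: the hypothesis then forces $\langle S', \pi' \rangle$ to be fully informative, posteriors are Dirac masses regardless of the prior, and both sides of \eqref{eq:narrow} vanish.

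The main technical point is securing the posterior bound uniformly across both prior perceptions in the first step; the full-support assumption is essential to obtain a common lower bound $\underline{q} > 0$ that is independent of $s'$. Once that is in hand, the remainder reduces to controlling the difference in pay at each signal by the total-variation distance of the posteriors from a common Dirac mass, which shrinks at rate $\varepsilon$, with everything then assembled by averaging.
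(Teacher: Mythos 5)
Your proposal is correct and follows essentially the same route as the paper: both arguments rest on the observation that the pay gap vanishes under full information and that near-full informativeness forces posteriors close to Dirac masses uniformly over the two perceptions, so the new gap is small enough to fall below the fixed right-hand side of \eqref{eq:narrow}. The only difference is cosmetic --- you derive an explicit Lipschitz-type bound $C\varepsilon$ and solve for $\varepsilon$, whereas the paper runs a soft continuity argument through an intermediate notion of $\delta$-extreme signal structures; like the paper, you implicitly rely on the right-hand side being non-negative (via \Cref{le:favor}) when splitting into the cases $R>0$ and $R=0$.
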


\Cref{prop:nearlyfull} is a continuity result. Under a fully informative signal structure $\langle S'', \pi'' \rangle$, each signal $s'' \in S''$ perfectly reveals some skill type $\theta_{s''} \in \Theta$ (that is, $q_{\langle S'', \pi'' \rangle }(\theta_{s''}|s'')=1$), so workers are paid
$$w_A(s'',q_I,\langle S'', \pi'' \rangle ) = \max_{a \in A} a(\theta_{s''}) = w_A(s'',q_J,\langle S'', \pi'' \rangle ) .$$
Hence there is no pay gap: $W_A(p,q_I,\langle S'',\pi'' \rangle ) - W_A(p,q_J,\langle S'',\pi'' \rangle ) = 0$. Since pay as a function of the posterior belief, $r \mapsto \max_{a \in A} \sum_{\theta \in \Theta} r(\theta) a(\theta)$, is continuous, it follows that the pay gap $W_A(p,q_I,\langle S',\pi' \rangle ) - W_A(p,q_J,\langle S',\pi' \rangle )$ can be made arbitrarily small (in particular, small enough that \eqref{eq:narrow} holds) by choosing a signal structure $\langle S', \pi' \rangle $ which produces sufficiently extreme beliefs. And the beliefs produced by signal structures $\langle S', \pi' \rangle $ that are within $\varepsilon$ of full information (in fact) become arbitrarily extreme as $\varepsilon>0$ shrinks.

\section{Concluding thoughts}
\label{sec:concl}

We have studied the interplay of information and prior misperceptions in shaping average pay. Our main theoretical contribution was a decomposition of the effect of additional information into a familiar instrumental term à la \textcite{Blackwell1951,Blackwell1953} and a perception-correcting term arising from misperception. This decomposition has implications for statistical discrimination.

In order to isolate the interaction between information and misperception, we have deliberately abstracted away from other realistic frictions, such as misperceptions about signal structures \parencite[as in][]{BohrenHaggagImasPope2025}, monopsony power in the labor market,%
\footnote{See e.g. \textcite{BoalRansom1997,Manning2003,Manning2021,Card2022,BergerHerkenhoffMongey2022,YehMacalusoHershbein2022}.}
and agency frictions inside the firm.%
\footnote{See e.g. \textcite{GibbonsRoberts2013,Mookherjee2006,Holmstrom2017}.}
Extending our decomposition to encompass these or other frictions is in principle straightforward: for each additional friction, there is an extra term in the decomposition capturing how new information impacts that friction, holding fixed both task assignment and the operation of the other frictions.



\begin{appendices}
\crefalias{section}{appsec}
\crefalias{subsection}{appsec}
\crefalias{subsubsection}{appsec}

\section{Proof of \texorpdfstring{\Cref{le:favor}}{Lemma \ref{le:favor}}}
\label{app:le:favor_pf}

For part~\ref{le:favor:suff}, suppose that $q' \succsim_{LR} q$. It suffices to show that
$$w_A(s,q',\langle S, \pi \rangle ) \geq w_A(s,q,\langle S, \pi \rangle )$$
holds for any signal structure $\langle S, \pi \rangle $, any signal $s \in S$, and any monotone firm $A \subset \mathcal{A}_M$. To that end, fix a signal structure $\langle S, \pi \rangle $ and a signal $s \in S$. Since $q' \succsim_{LR} q$, we have for any skill types $\theta'' \geq \theta'$ in $\Theta$ that
\begin{align*}
q_{\langle S, \pi \rangle }(\theta'|s)
q'_{\langle S, \pi \rangle }(\theta''|s)
&= \pi(s|\theta') \pi(s|\theta'')
q(\theta') q'(\theta'') / k
\\
&\geq \pi(s|\theta') \pi(s|\theta'')
q(\theta'') q'(\theta') / k
= q_{\langle S, \pi \rangle }(\theta''|s)
q'_{\langle S, \pi \rangle }(\theta'|s) 
\end{align*}
where $k>0$ is a constant, which is to say that $q'_{\langle S, \pi \rangle }(\cdot|s) \succsim_{LR} q_{\langle S, \pi \rangle }(\cdot|s)$. Hence $q'_{\langle S, \pi \rangle }(\cdot|s)$ first-order stochastically dominates $q_{\langle S, \pi \rangle }(\cdot|s)$, so for any monotone firm $A \subset \mathcal{A}_M$, letting
$\widehat{a}_s
\in \argmax_{a \in A} \sum_{\theta \in \Theta} q_{\langle S, \pi \rangle }(\theta|s) a(\theta)$,
we have
\begin{multline*}
w_A(s,q',\langle S, \pi \rangle )
= \max_{a \in A} \sum_{\theta \in \Theta} q'_{\langle S, \pi \rangle }(\theta|s) a(\theta)
\geq \sum_{\theta \in \Theta} q'_{\langle S, \pi \rangle }(\theta|s) \widehat{a}_s(\theta)
\\
\geq \sum_{\theta \in \Theta} q_{\langle S, \pi \rangle }(\theta|s) \widehat{a}_s(\theta)
= w_A(s,q,\langle S, \pi \rangle ) ,
\end{multline*}
where the second inequality holds since $\widehat{a}_s \in \mathcal{A}_M$ and $q'_{\langle S, \pi \rangle }(\cdot|s)$ first-order stochastically dominates $q_{\langle S, \pi \rangle }(\cdot|s)$.

For part~\ref{le:favor:nec}, suppose that $q' \not\succsim_{LR} q$. Then there exist $\theta'' > \theta'$ in $\Theta$ such that $q'(\theta'')/q'(\theta') < q(\theta'')/q(\theta')$. Let $\langle S, \pi \rangle $ be the signal structure that sends message $s$ if skill is $\theta \in \{\theta',\theta''\}$, and otherwise perfectly reveals skill: $S = \{s\} \cup \{s_\theta\}_{\theta \in \Theta \setminus \{\theta',\theta''\}}$, $\pi(s|\theta')=\pi(s|\theta'')=1$ and $\pi(s_\theta|\theta)=1$ for every $\theta \in \Theta \setminus \{\theta',\theta''\}$. Fix any monotone firm $A \subset \mathcal{A}_M$. Following the imperfectly revealing signal $s \in S$, we have $q'_{\langle S, \pi \rangle }(\cdot|s) \prec_{LR} q_{\langle S, \pi \rangle }(\cdot|s)$ since
$$\frac{q'_{\langle S, \pi \rangle }(\theta''|s)}{q'_{\langle S, \pi \rangle }(\theta'|s)}
= \frac{\pi(s|\theta'')}{\pi(s|\theta')}
\frac{q'(\theta'')}{q'(\theta')}
< \frac{\pi(s|\theta'')}{\pi(s|\theta')}
\frac{q(\theta'')}{q(\theta')} 
= \frac{q_{\langle S, \pi \rangle }(\theta''|s)}{q_{\langle S, \pi \rangle }(\theta'|s)}.$$
Hence $q'_{\langle S, \pi \rangle }(\cdot|s)$ is strictly first-order stochastically dominated by $q_{\langle S, \pi \rangle }(\cdot|s)$, so letting
$\widehat{a}'_s
\in \argmax_{a \in A} \sum_{\theta \in \Theta} q'_{\langle S, \pi \rangle }(\theta|s) a(\theta)$,
we have
\begin{multline*}
w_A(s,q',\langle S, \pi \rangle )
= \sum_{\theta \in \Theta} q'_{\langle S, \pi \rangle }(\theta|s) \widehat{a}'_s(\theta)
< \sum_{\theta \in \Theta} q_{\langle S, \pi \rangle }(\theta|s) \widehat{a}'_s(\theta)
\\
\leq \max_{a \in A} \sum_{\theta \in \Theta} q_{\langle S, \pi \rangle }(\theta|s) a(\theta)
= w_A(s,q,\langle S, \pi \rangle ),
\end{multline*}
where the strict inequality holds since $\widehat{a}'_s \in \mathcal{A}_M$ and $q'_{\langle S, \pi \rangle }(\cdot|s)$ is strictly first-order stochastically dominated by $q_{\langle S, \pi \rangle }(\cdot|s)$. When skill is fully revealed, average pay does not depend on the prior perception: for each $\theta \in \Theta \setminus \{\theta',\theta''\}$, $q'_{\langle S, \pi \rangle }(\theta|s_\theta) = q_{\langle S, \pi \rangle }(\theta|s_\theta) = 1$, so
$$w_A(s_{\theta},q',\langle S, \pi \rangle )
= \max_{a \in A} a(\theta)
= w_A(s_{\theta},q,\langle S, \pi \rangle ) .$$
Hence $W_A(p,q',\langle S, \pi \rangle ) < W_A(p,q,\langle S, \pi \rangle )$.
\qed

\section{Proof of \texorpdfstring{\Cref{th:decomp}}{Theorem \ref{th:decomp}}}
\label{app:th:decomp_pf}

For part~\ref{th:decomp:decomp}, define
\begin{equation*}
K \coloneqq \sum_{s\in S} \mu_p(s) \sum_{s'\in S'} \mu_p(s'|s)
\sum_{\theta\in\Theta} \mu_q(\theta|s,s') \widehat{a}_{s}(\theta) .
\end{equation*}
Firstly,
\begin{align*}
W_A(p,q,\langle S, \pi \rangle )
&=\sum_{s\in S} \mu_p(s) w_A(s,q,\langle S, \pi \rangle )
\\
&=\sum_{s\in S} \mu_p(s)
\sum_{\theta \in \Theta} \left[ \sum_{s' \in S'} \mu_q(s'|s) \mu_q(\theta|s,s') \right] \widehat{a}_s(\theta)
\\
&= \sum_{s\in S} \mu_p(s) \sum_{s'\in S'} \mu_q(s'|s)
\sum_{\theta\in\Theta} \mu_q(\theta|s,s') \widehat{a}_{s}(\theta) 
\\
&= K - \mathcal{C}_A(p,q,\langle S, \pi \rangle ,\langle S', \pi' \rangle ),
\end{align*}
where the second equality holds because $q_{\langle S, \pi \rangle }(\theta|s) = \mu_q(\theta|s)$, which equals the bracketed term by the law of total probability. Secondly,
\begin{align*}
W_A(p,q,\langle S', \pi' \rangle )
&= \sum_{s'\in S'} \mu_p(s') w_A(s',q,\langle S', \pi' \rangle )
\\
&= \sum_{s\in S} \mu_p(s)
\sum_{s'\in S'} \mu_p(s'|s)
\sum_{\theta\in\Theta} q_{\langle S', \pi' \rangle }(\theta|s') \widehat{a}'_{s'}(\theta)
\\
&= \sum_{s\in S} \mu_p(s)
\sum_{s'\in S'} \mu_p(s'|s)
\sum_{\theta\in\Theta} \mu_q(\theta|s,s') \widehat{a}'_{s'}(\theta)
\\
&= K + \mathcal{I}_A(p,q,\langle S, \pi \rangle ,\langle S', \pi' \rangle ) ,
\end{align*}
where the third equality holds since $q_{\langle S', \pi' \rangle }(\theta|s') = \mu_q(\theta|s') = \mu_q(\theta|s,s')$.

For part~\ref{th:decomp:instr}, we have for all $s \in S$ and $s' \in S'$ that
\begin{multline*}
\sum_{\theta \in \Theta} \mu_q(\theta|s,s') \widehat{a}'_{s'}(\theta)
= \sum_{\theta \in \Theta} q_{\langle S', \pi' \rangle }(\theta|s') \widehat{a}'_{s'}(\theta)
= \max_{a \in A} \sum_{\theta \in \Theta} q_{\langle S', \pi' \rangle }(\theta|s') a(\theta)
\\
\geq \sum_{\theta \in \Theta} q_{\langle S', \pi' \rangle }(\theta|s') \widehat{a}_s(\theta)
= \sum_{\theta \in \Theta} \mu_q(\theta|s,s') \widehat{a}_s(\theta) ,
\end{multline*}
and thus
\begin{multline*}
\mathcal{I}_A(p,q,\langle S, \pi \rangle ,\langle S', \pi' \rangle )
\\
= \sum_{s \in S} \sum_{s' \in S'} \mu_p(s,s')
\sum_{\theta \in \Theta} \mu_q(\theta|s,s') \left[ \widehat{a}'_{s'}(\theta) - \widehat{a}_s(\theta) \right]
\geq 0 .
\end{multline*}

For part~\ref{th:decomp:corr_pos}, suppose in addition that $A \subset \mathcal{A}_M$, that $\langle S', \pi' \rangle $ is MLR, and that $p \succsim_{LR} q$. It is enough to show that
\begin{equation*}
\sum_{s'\in S'} [ \mu_p(s'|s) - \mu_q(s'|s)  ] \sum_{\theta\in\Theta} \mu_q(\theta|s,s') \widehat{a}_{s}(\theta)
\geq 0 \quad \text{for every $s \in S$.}
\end{equation*}
To that end, we fix an arbitrary $s \in S$ and establish that
\begin{enumerate}
\item \label{th:decomp:pf_incr} $s' \mapsto \sum_{\theta\in\Theta} \mu_q(\theta|s,s') \widehat{a}_{s}(\theta)$
is increasing, and that
\item \label{th:decomp:pf_fosd} $s' \mapsto \mu_p(s'|s)$ first-order stochastically dominates $s' \mapsto \mu_q(s'|s)$.
\end{enumerate}

To establish claim~\ref{th:decomp:pf_incr}, note that for any $t'>s'$ in $S'$, $\theta \mapsto \mu_q(\theta|s,t')$ is more favorable than $\theta \mapsto \mu_q(\theta|s,s')$ since for any $\theta''>\theta'$ in $\Theta$,
\begin{align*}
\mu_q(\theta'|s,s')
\mu_q(\theta''|s,t')
&= q_{\langle S', \pi' \rangle }(\theta'|s')
q_{\langle S', \pi' \rangle }(\theta''|t')
\\
&= \pi'(s'|\theta') \pi'(t'|\theta'') q(\theta') q(\theta'') / k
\\
&\geq \pi'(t'|\theta') \pi'(s'|\theta'') q(\theta') q(\theta'') / k
\\
&= q_{\langle S', \pi' \rangle }(\theta'|t')
q_{\langle S', \pi' \rangle }(\theta''|s')
= \mu_q(\theta'|s,t')
\mu_q(\theta''|s,s') 
\end{align*}
for a constant $k>0$, where the first and last equalities use the fact that $\theta$ is independent of $s$ conditional on $s'$ under $\mu_q$, and the inequality holds since $\langle S', \pi' \rangle $ is MLR. Hence $\theta \mapsto \mu_q(\theta|s,t')$ first-order stochastically dominates $\theta \mapsto \mu_q(\theta|s,s')$. Because $\widehat{a}_s \in \mathcal{A}_M$, claim~\ref{th:decomp:pf_incr} follows.

To establish claim~\ref{th:decomp:pf_fosd}, define
\begin{equation*}
\kappa(s'|\theta) \coloneqq \frac{ \pi'(s'|\theta) g(s|s') }{ \sum_{t' \in S'} \pi'(t'|\theta) g(s|t') } 
\quad \text{for each $s' \in S'$ and $\theta \in \Theta$,}
\end{equation*}
and use bars to denote CDFs, in particular
\begin{equation*}
\overline{\pi'}(s'|\theta) \coloneqq \sum_{t' \in S' : t' \leq s'} \pi'(t'|\theta) \quad \text{and} \quad
\overline{\kappa}(s'|\theta) \coloneqq \sum_{t' \in S' : t' \leq s'} \kappa(t'|\theta)
\end{equation*}
for all $s' \in S'$ and $\theta \in \Theta$. For each $s' \in S'$, we have
\begin{equation*}
\mu_p(s'|s)
= \sum_{\theta \in \Theta} \mu_p(\theta|s) \mu_p(s'|\theta,s)
= \sum_{\theta \in \Theta} p_{\langle S, \pi \rangle }(\theta|s) \kappa(s'|\theta) .
\end{equation*}
Performing the same calculation for $\mu_q$ and subtracting yields
$$\mu_p(s'|s) - \mu_q(s'|s)
= \sum_{\theta \in \Theta} \left[ p_{\langle S, \pi \rangle }(\theta|s) - q_{\langle S, \pi \rangle }(\theta|s) \right] \kappa(s'|\theta) \quad \text{for each $s' \in S'$,}$$
and thus
\begin{equation*}
\overline{\mu_p}(s'|s) - \overline{\mu_q}(s'|s)
= \sum_{\theta \in \Theta} \left[ p_{\langle S, \pi \rangle }(\theta|s) - q_{\langle S, \pi \rangle }(\theta|s) \right] \overline{\kappa}(s'|\theta) \quad \text{for each $s' \in S'$.}
\end{equation*}
As shown in the proof of \Cref{le:favor}, the fact that $p \succsim_{LR} q$ implies that $p_{\langle S, \pi \rangle }(\cdot|s)$ is more favorable than $q_{\langle S, \pi \rangle }(\cdot|s)$, which in turn implies that $p_{\langle S, \pi \rangle }(\cdot|s)$ first-order stochastically dominates $q_{\langle S, \pi \rangle }(\cdot|s)$. Hence to establish claim~\ref{th:decomp:pf_fosd}, it suffices to show that for each $s' \in S'$, $\theta \mapsto \overline{\kappa}(s'|\theta)$ is decreasing.

To that end, fix $\theta' > \theta$ in $\Theta$, and enumerate the elements of $S'$ as $S' = \{s_1',s_2',\dots,s_n'\}$, where $s_1' < s_2' < \cdots < s_n'$; we will show that $\overline{\kappa}(s_i'|\theta) \geq \overline{\kappa}(s_i'|\theta')$ for every $i \in \{1,2,\dots,n\}$. For each $i \in \{1,2,\dots,n\}$, define $b_i \coloneqq \pi'(s_i'|\theta) g(s|s_i')$ and $b_i' \coloneqq \pi'(s_i'|\theta') g(s|s_i')$. Because $\langle S', \pi' \rangle $ is MLR, we have $b_j b_k' \geq b_j' b_k$ whenever $j < k$. Hence for each $i \in \{1,2,\dots,n\}$, it holds that
$$\sum_{j \leq i} \sum_{k>i} \left( b_j b_k' - b_j' b_k \right) \geq 0 ,$$
or equivalently
$$\left( \sum_{j \leq i} b_j \right) \left( \sum_k b_k' \right) \geq \left( \sum_{j \leq i} b_j' \right) \left( \sum_k b_k \right) ,$$
whence
$$\overline{\kappa}(s_i'|\theta) = \frac{ \sum_{j \leq i} b_j }{ \sum_k b_k } \geq \frac{ \sum_{j \leq i} b_j' }{ \sum_k b_k' } = \overline{\kappa}(s_i'|\theta') .$$

Finally, for part~\ref{th:decomp:corr_neg}, suppose that $q \succsim_{LR} p$. Analogously to part~\ref{th:decomp:corr_pos}, it suffices to fix an arbitrary $s \in S$ and to establish claim~\ref{th:decomp:pf_incr} above and
\begin{enumerate}[label=(\arabic*$'$)]
\setcounter{enumi}{1}
\item \label{th:decomp:pf_fosd_rev} that $s' \mapsto \mu_q(s'|s)$ first-order stochastically dominates $s' \mapsto \mu_p(s'|s)$.
\end{enumerate}
Claim~\ref{th:decomp:pf_incr} follows from exactly the argument above, while claim~\ref{th:decomp:pf_fosd_rev} follows from applying the above argument for claim~\ref{th:decomp:pf_fosd} except with $p \succsim_{LR} q$ replaced by $q \succsim_{LR} p$. \qed

\section{Proof of \texorpdfstring{\Cref{prop:slight}}{Proposition \ref{prop:slight}}}
\label{app:prop:slight_pf}

By definition of ``slightly more informative than,'' assumption~\ref{prop:slight:slight} implies that each population's instrumental component is zero:
$$\mathcal{I}_A(p,q_I,\langle S, \pi \rangle ,\langle S', \pi' \rangle )
= \mathcal{I}_A(p,q_J,\langle S, \pi \rangle ,\langle S', \pi' \rangle )
= 0 .$$
Hence
\begin{align*}
&\bigl[ W_A(p,q_I,\langle S',\pi' \rangle )-W_A(p,q_J,\langle S',\pi' \rangle ) \bigr]
\\
&\quad - \bigl[ W_A(p,q_I,\langle S,\pi \rangle )-W_A(p,q_J,\langle S,\pi \rangle ) \bigr]
\\
&\qquad\qquad = \mathcal{C}_A(p,q_I,\langle S, \pi \rangle ,\langle S', \pi' \rangle )
- \mathcal{C}_A(p,q_J,\langle S, \pi \rangle ,\langle S', \pi' \rangle )
\leq 0 ,
\end{align*}
where the equality holds by \Cref{th:decomp}\ref{th:decomp:decomp}, and the inequality holds because the first ``$\mathcal{C}_A$'' term is non-positive by \Cref{th:decomp}\ref{th:decomp:corr_neg} (applicable by assumptions~\ref{prop:slight:mon}--\ref{prop:slight:over}) and the second ``$\mathcal{C}_A$'' term is non-negative by \Cref{th:decomp}\ref{th:decomp:corr_pos} (applicable by assumptions~\ref{prop:slight:mon}--\ref{prop:slight:mlr} and \ref{prop:slight:under}).
\qed

\section{Proof of \texorpdfstring{\Cref{prop:tight}}{Proposition \ref{prop:tight}}}
\label{app:prop:tight_pf}

For property~\ref{prop:slight:mon}, return to \Cref{ex:mon_fail}, where property~\ref{prop:slight:mon} fails, and properties~\ref{prop:slight:mlr} and \ref{prop:slight:slight} are satisfied (recall that for a single-task firm, every increase of informativeness is slight). Recall that given a skill distribution $p$ and perception $q$, the change in average pay is
$$W_{\left\{ \underline{a} \right\}}(p,q,\langle S', \pi' \rangle )
- W_{\left\{ \underline{a} \right\}}(p,q,\langle S, \pi \rangle )
= p(0) - q(0) .$$
Thus if $q_I(0) < p(0) < q_J(0)$, then \eqref{eq:narrow} fails, and properties~\ref{prop:slight:over} and \ref{prop:slight:under} hold.

For property~\ref{prop:slight:mlr}, return to \Cref{ex:mlr_fail}, where property~\ref{prop:slight:mlr} fails, and properties~\ref{prop:slight:mon} and \ref{prop:slight:slight} are satisfied (again recall that for a single-task firm, every increase of informativeness is slight). Given a skill distribution $p$ and perception $q$, the change in average pay is
\begin{multline*}
W_{\left\{ \overline{a} \right\}}(p,q,\langle S', \pi' \rangle )
- W_{\left\{ \overline{a} \right\}}(p,q,\langle S, \pi \rangle )
\\
\begin{aligned}
&= \left[ p(1) \cdot 1 + (1-p(1)) \cdot \left( \frac{q(0)}{q(0)+q(2)} \cdot 0 + \frac{q(2)}{q(0)+q(2)} \cdot 2 \right) \right]
\\
&\qquad - \left[ q(1) \cdot 1 + q(2) \cdot 2 \right] 
\\
&= [q(1) - p(1)] \left[ \frac{2q(2)}{1-q(1)} - 1 \right] ,
\end{aligned}
\end{multline*}
so \eqref{eq:narrow} fails if $q_I(1) > p(1) > q_J(1)$ and $\frac{q_I(2)}{1-q_I(1)} > \frac{1}{2} < \frac{q_J(2)}{1-q_J(1)}$. Let
$$\bigl( q_J(\theta), p(\theta), q_I(\theta) \bigr) \coloneqq
\begin{cases}
\left( \frac{1}{4}, \frac{1}{4}-3\delta, \frac{1}{4}-6\delta \right) & \text{for $\theta=0$} \\
\left( \frac{1}{4}, \frac{1}{4}+ \delta, \frac{1}{4}+2\delta \right) & \text{for $\theta=1$} \\
\left( \frac{1}{2}, \frac{1}{2}+2\delta, \frac{1}{2}+4\delta \right) & \text{for $\theta=2$}
\end{cases}$$
where $0 < \delta < 1/24$; then \eqref{eq:narrow} fails, and properties~\ref{prop:slight:over} and \ref{prop:slight:under} both hold.

For property~\ref{prop:slight:over}, suppose that skill types are binary, $\Theta = \{0,1\}$. Let $\overline{a} \in \mathcal{A}_M$ be the task given by $\overline{a}(\theta) \coloneqq \theta$ for each $\theta \in \Theta$, and consider the firm $A = \{\overline{a}\}$; then properties~\ref{prop:slight:mon} and \ref{prop:slight:slight} hold (again recall that for a single-task firm, every increase of informativeness is slight). Let $\langle S, \pi \rangle $ be an uninformative signal structure, i.e. one such that $\pi(s|\theta) = \pi(s|\theta')$ for every signal $s \in S$ and all skill types $\theta,\theta' \in \Theta$. Let $\langle S', \pi' \rangle $ be the signal structure given by $S' = \bigl\{s^0,s^1\bigr\}$ and $\pi'\bigl(s^0\bigm|0\bigr)=\pi'\bigl(s^1\bigm|1\bigr)=3/4$, and note that property~\ref{prop:slight:mlr} holds. Given a skill distribution $p$ and perception $q$, the change in average pay is
\begin{multline*}
W_{\left\{ \overline{a} \right\}}(p,q,\langle S', \pi' \rangle )
- W_{\left\{ \overline{a} \right\}}(p,q,\langle S, \pi \rangle )
\\
\begin{aligned}
&= \frac{ (1-p(1)) \frac{3}{4} + p(1) \frac{1}{4} }{ (1-q(1)) \frac{3}{4} + q(1) \frac{1}{4} } q(1) \tfrac{1}{4}
+ \frac{ (1-p(1)) \frac{1}{4} + p(1) \frac{3}{4} }{ (1-q(1)) \frac{1}{4} + q(1) \frac{3}{4} } q(1) \tfrac{3}{4}
- q(1) 
\\
&= \left[ \frac{ 3-2p(1) }{ 3-2q(1) }
+ 3 \frac{ 1 + 2p(1) }{ 1 + 2q(1) } - 4 \right] \frac{q(1)}{4} .
\end{aligned}
\end{multline*}
Let $p(1)=3/4$, $q_I(1)=1/4$ and $q_J(1)=1/6$; then property~\ref{prop:slight:over} fails, property~\ref{prop:slight:under} holds, and \eqref{eq:narrow} fails since
\begin{multline*}
\bigl[ W_{\left\{ \overline{a} \right\}}(p,q_I,\langle S', \pi' \rangle )
- W_{\left\{ \overline{a} \right\}}(p,q_I,\langle S, \pi \rangle ) \bigr]
\\
- \bigl[ W_{\left\{ \overline{a} \right\}}(p,q_J,\langle S', \pi' \rangle )
- W_{\left\{ \overline{a} \right\}}(p,q_J,\langle S, \pi \rangle ) \bigr]
= \frac{1}{10} - \frac{35}{384} > 0 .
\end{multline*}

For property~\ref{prop:slight:under}, apply the same argument as for \ref{prop:slight:over}, except with $p(1)=1/4$, $q_J(1)=3/4$ and $q_I(1)=5/6$.

For property~\ref{prop:slight:slight}, suppose that skill types are binary, $\Theta = \{0,1\}$, and let the skill distribution $p$ satisfy $p(0)=p(1)=1/2$. Define $\overline{a},\widetilde{a} \in \mathcal{A}$ by $\overline{a}(\theta) \coloneqq \theta$ and $\widetilde{a}(\theta) \coloneqq 4(2\theta-1)$ for each $\theta \in \Theta$, and consider the firm $A = \{\overline{a},\widetilde{a}\}$, noting that property~\ref{prop:slight:mon} holds. For each $\lambda \in [1/2,1]$, let $\langle S_\lambda, \pi_\lambda \rangle $ be the MLR signal structure given by $S_\lambda = \bigl\{s^0,s^1\bigr\}$ and $\pi_\lambda\bigl(s^0\bigm|0\bigr)=\pi_\lambda\bigl(s^1\bigm|1\bigr)=\lambda$. For any perception $q$ and any $\lambda \in [1/2,1]$, average pay at firm~$A$ is
\begin{multline*}
\begin{aligned}
W_A(p,q,\langle S_\lambda, \pi_\lambda \rangle )
&= \left[ \frac{1}{2} \lambda + \frac{1}{2} (1-\lambda) \right]
\max\left\{ r^0_{q,\lambda}, 4 \left( 2 r^0_{q,\lambda} - 1 \right) \right\}
\\
&\quad + \left[ \frac{1}{2} (1-\lambda) + \frac{1}{2} \lambda \right]
\max\left\{ r^1_{q,\lambda}, 4 \left( 2 r^1_{q,\lambda} - 1 \right) \right\}
\end{aligned}
\\
= \frac{1}{2} \max\left\{ r^0_{q,\lambda}, 4\left( 2 r^0_{q,\lambda} - 1 \right) \right\}
+ \frac{1}{2} \max\left\{ r^1_{q,\lambda}, 4\left( 2 r^1_{q,\lambda} - 1 \right) \right\}
\end{multline*}
where
$$r^0_{q,\lambda} \coloneqq \frac{ q(1) (1-\lambda) }{ (1-q(1)) \lambda + q(1) (1-\lambda) }
\quad \text{and} \quad
r^1_{q,\lambda} \coloneqq \frac{ q(1) \lambda }{ (1-q(1)) (1-\lambda) + q(1) \lambda } .$$
Thus if perceptions are $q_I(1)=3/4$ and $q_J(1)=1/4$, and signal structures are $\langle S, \pi \rangle = \langle S_{9/13}, \pi_{9/13} \rangle $ and $\langle S', \pi' \rangle = \langle S_{4/5}, \pi_{4/5} \rangle $, then properties~\ref{prop:slight:mlr}, \ref{prop:slight:over} and \ref{prop:slight:under} are satisfied, $\langle S', \pi' \rangle$ is more informative than $\langle S, \pi \rangle $, and \eqref{eq:narrow} fails by direct computation (as illustrated in \Cref{fig:tight-slight}). By \Cref{prop:slight}, since properties~\ref{prop:slight:mon}--\ref{prop:slight:under} hold and \eqref{eq:narrow} fails, it must be that property~\ref{prop:slight:slight} fails.
\qed

\section{Proof of \texorpdfstring{\Cref{prop:nearlyfull}}{Proposition \ref{prop:nearlyfull}}}
\label{app:prop:nearlyfull_pf}
If a signal structure $\langle S'', \pi'' \rangle $ is fully informative, meaning that for each signal $s'' \in S''$ there is a skill type $\theta_{s''} \in \Theta$ such that $\pi''(s''|\theta) = 0$ for every $\theta \in \Theta \setminus \{\theta_{s''}\}$, then
$$w_A(s'',q_I,\langle S'', \pi'' \rangle ) = \max_{a \in A} a(\theta) = w_A(s'',q_J,\langle S'', \pi'' \rangle ) \quad \text{for every $s'' \in S''$,}$$
so average pay is equal: $W_A(p,q_I,\langle S'',\pi'' \rangle ) - W_A(p,q_J,\langle S'',\pi'' \rangle ) = 0$.

Write $\eta \coloneqq W_A(p,q_I,\langle S,\pi \rangle ) - W_A(p,q_J,\langle S,\pi \rangle )$. Since $q_I \succsim_{LR} q_J$, we have $\eta \geq 0$ by \Cref{le:favor}. In case $\eta=0$, let $\varepsilon=0$; then any signal structure $\langle S', \pi' \rangle $ that is within $\varepsilon$ of full information is itself fully informative, so $W_A(p,q_I,\langle S',\pi' \rangle ) - W_A(p,q_J,\langle S',\pi' \rangle ) = 0 = \eta$.

Assume for the remainder that $\eta > 0$. For any $\delta \in [0,1]$, call a signal structure $\langle S', \pi' \rangle $ \emph{$\delta$-extreme} if $q_{K,\langle S', \pi' \rangle}(\theta|s') \in [0,\delta] \cup [1-\delta,1]$ for both populations $K \in \{I,J\}$, every signal $s' \in S'$, and every skill type $\theta \in \Theta$. By inspection, pay as a function of the posterior belief, $r \mapsto \max_{a \in A} \sum_{\theta \in \Theta} r(\theta) a(\theta)$, is continuous. Since any fully informative signal structure $\langle S'', \pi'' \rangle $ is $0$-extreme and satisfies $W_A(p,q_I,\langle S'',\pi'' \rangle ) - W_A(p,q_J,\langle S'',\pi'' \rangle ) = 0$, it follows that there exists a $\delta \in (0,1]$ such that any $\delta$-extreme signal structure $\langle S', \pi' \rangle $ satisfies $W_A(p,q_I,\langle S',\pi' \rangle ) - W_A(p,q_J,\langle S',\pi' \rangle ) \leq \eta$.

It remains only to show that for any $\delta \in (0,1]$, there exists an $\varepsilon>0$ such that any signal structure $\langle S',\pi'\rangle$ that is within $\varepsilon$ of full information is $\delta$-extreme. To that end, fix a $\delta \in (0,1]$, and define
$$\varepsilon \coloneqq \frac{\delta}{1-\delta} \frac{\min_{\theta \in \Theta} (\min\left\{ q_I(\theta), q_J(\theta) \right\})}{\lvert \Theta \rvert - 1} .$$
$\varepsilon$ is well-defined and strictly positive since $\lvert \Theta \rvert \geq 2$ and $q_I$ and $q_J$ have full support. Fix a signal structure $\langle S',\pi'\rangle$ that is within $\varepsilon$ of full information. Then for each $s' \in S'$, there exists a skill type $\theta_{s'} \in \Theta$ such that $\pi'(s'|\theta) \leq \varepsilon \pi'(s'|\theta_{s'})$ for every other skill type $\theta \in \Theta \setminus \{\theta_{s'}\}$, whence for any $K \in \{I,J\}$,
\begin{align*}
q_{K,\langle S',\pi'\rangle}(\theta_{s'}|s')
&= \frac{ 1 }{ 1 + \left. \left( \sum_{\theta \in \Theta \setminus \{\theta_{s'}\}} \frac{\pi'(s'|\theta)}{\pi'(s'|\theta_{s'})} q_K(\theta) \right) \middle/ q_K(\theta_{s'}) \right. }
\\
&\geq \frac{ 1 }{ 1 + (\lvert \Theta \rvert - 1) \varepsilon / [ \min_{\theta \in \Theta} q_K(\theta) ] }
\\
&\geq \frac{ 1 }{ 1 + (\lvert \Theta \rvert - 1) \varepsilon / [ \min_{\theta \in \Theta} (\min\left\{ q_I(\theta), q_J(\theta) \right\}) ] }
= 1 -\delta 
\end{align*}
and (thus) $q_{K,\langle S',\pi'\rangle}(\theta|s') \leq 1 - q_{K,\langle S',\pi'\rangle}(\theta_{s'}|s') \leq
\delta$ for every $\theta \in \Theta \setminus \{\theta_{s'}\}$.
\qed

\end{appendices}



\printbibliography[heading=bibintoc]

@string{ecta = "Econometrica"}

@string{aer  = "American Economic Review"}

@string{qje = "Quarterly Journal of Economics"}

@string{ej = "Economic Journal"}

@string{jel = "Journal of Economic Literature"}

@string{jep = "Journal of Economic Perspectives"}

@string{jet = "Journal of Economic Theory"}

@string{te = "Theoretical Economics"}

@string{pnas = "Proceedings of the National Academy of Sciences"}

@string{pup = "Princeton University Press"}

@article{AganStarr2018,
	author = "Amanda Agan and Sonja Starr",
	year = "2018",
	title = "Ban the box, criminal records, and racial discrimination",
    subtitle = "A field experiment",
	journaltitle= qje,
	volume = "133",
    number = "1",
	pages = "191--235",
    doi = "10.1093/qje/qjx028",
}

@article{AignerCain1977,
	author = "Dennis J. Aigner and Glen G. Cain",
	year = "1977",
	title = "Statistical theories of discrimination in labor markets",
	journaltitle= "Industrial and Labor Relations Review",
	volume = "30",
	number = "2",
	pages = "175--187",
	doi = "10.1177/001979397703000204",
}

@article{AlonsoCamara2016,
	author = "Ricardo Alonso and Odilon Câmara",
	year = "2016",
	title = "{B}ayesian persuasion with heterogeneous priors",
	journaltitle= jet,
	volume = "165",
	pages = "672--706",
	doi = "10.1016/j.jet.2016.07.006",
}

@article{ArnoldDobbieYang2018,
	author = "David Arnold and Will Dobbie and Crystal S Yang",
	year = "2018",
	title = "Racial bias in bail decisions",
	journaltitle= qje,
    volume = "133",
    number = "4",
	pages = "1885–-1932",
    doi = "10.1093/qje/qjy012",
}

@article{AzmatPetrongolo2016,
	author = "Ghazala Azmat and Barbara Petrongolo",
	year = "2016",
	title = "Gender and the labor market",
    subtitle = "What have we learned from field and lab experiments?",
	journaltitle= "Labour Economics",
	volume = "30",
	number = "",
    pages = "32--40",
    doi = "10.1016/j.labeco.2014.06.005",
}

@article{BergerHerkenhoffMongey2022,
	author = "David Berger and Kyle Herkenhoff and Simon Mongey",
	year = "2022",
	title = "Labor market power",
	journaltitle= aer,
	volume = "112",
	number = "4",
	pages = "1147--1193",
    doi = "10.1257/aer.20191521",
}

@article{Bergson1938,
	author = "Abram Bergson",
	year = "1938",
	title = "A reformulation of certain aspects of welfare economics",
	journaltitle= qje,
	volume = "52",
	number = "2",
	pages = "310--334",
	doi = "10.2307/1881737",
}

@article{BertrandMullainathan2004,
    author = "Marianne Bertrand and Sendhil Mullainathan",
	year = "2004",
	title = "Are {E}mily and {G}reg more employable than {L}akisha and {J}amal? {A} field experiment on labor market discrimination",
	journaltitle = aer,
	volume = "94",
	number = "4",
	pages = "991--1013",
    doi = "10.1257/0002828042002561",
}

@article{Blackwell1953,
	author = "David Blackwell",
	year = "1953",
	title = "Equivalent comparisons of experiments",
	journaltitle= "Annals of Mathematical Statistics",
	volume = "24",
	number = "2",
	pages = "265--272",
	doi = "10.1214/aoms/1177729032",
}

@article{BlackwellDubins1962,
	author = "David Blackwell and Lester Dubins",
	year = "1962",
	title = "Merging of opinions with increasing information",
	journaltitle= "Annals of Mathematical Statistics",
	volume = "33",
	number = "3",
	pages = "882--886",
	doi = "10.1214/aoms/1177704456",
}

@article{BlauKahn2017,
	author = "Francine D. Blau and Lawrence M. Kahn",
	year = "2017",
	title = "The gender wage gap",
    subtitle = "Extent, trends, and explanations",
	journaltitle= jel,
	volume = "55",
	number = "3",
    pages = "789--865",
    doi = "10.1257/jel.20160995",
}

@article{BoalRansom1997,
	author = "William M. Boal and Michael R. Ransom",
	year = "1997",
	title = "Monopsony in the labor market",
	journaltitle= jel,
	volume = "35",
	number = "1",
    pages = "86--112",
}

@article{BohrenHaggagImasPope2025,
	author = "J. Aislinn Bohren and Kareem Haggag and Alex Imas and Devin G. Pope",
	year = "2025",
	title = "Inaccurate statistical discrimination",
    subtitle = "An identification problem",
	journaltitle= "Review of Economics and Statistics",
	volume = "107",
	number = "3",
	pages = "605--620",
	doi = "10.1162/rest_a_01367",
}

@article{BohrenImasRosenberg2019,
	author = "J. Aislinn Bohren and Alex Imas and Michael Rosenberg",
	year = "2019",
	title = "The dynamics of discrimination",
    subtitle = "Theory and
evidence",
	journaltitle= aer,
	volume = "109",
	number = "10",
	pages = "3395--3436",
    doi = "10.1257/aer.20171829",
}

@article{Card2022,
	author = "David Card",
	year = "2022",
	title = "Who set \emph{your} wage?",
	journaltitle= aer,
	volume = "112",
	number = "4",
	pages = "1075--1090",
    doi = "10.1257/aer.112.4.1075",
}

@article{ChambersEchenique2021,
	author = "Christopher P. Chambers and Federico Echenique",
	year = "2021",
	title = "A characterisation of `{P}helpsian' statistical discrimination",
	journaltitle= ej,
	volume = "131",
	number = "637",
	pages = "2018--2032",
	langid = "american",
	doi = "10.1093/ej/ueaa107",
}

@article{CuiLiZhang2020,
	author = "Ruomeng Cui and Jun Li and Dennis J. Zhang",
	year = "2020",
	title = "Reducing discrimination with reviews in the sharing economy",
	subtitle = "Evidence from field experiments on {A}irbnb",
	journaltitle= "Management Science",
	volume = "66",
	number = "3",
	pages = "1071--1094",
	langid = "american",
	doi = "10.1287/mnsc.2018.3273",
}

@article{DworczakKominersAkbarpour2021,
	author = "Piotr Dworczak and Scott Duke Kominers and Mohammad Akbarpour",
	year = "2021",
	title = "Redistribution through markets",
	journaltitle= ecta,
	volume = "89",
    number = "4",
	pages = "1665--1698",
    doi = "10.3982/ECTA16671",
}

@article{GaeblerGoel2025,
	author = "Johann D. Gaebler and Sharad Goel",
	year = "2025",
	title = "A simple, statistically robust test of discrimination",
	journaltitle= pnas,
	volume = "122",
	number = "10",
	langid = "american",
	doi = "10.1073/pnas.2416348122",
}

@article{Girsanov1960,
	author = "Igor Vladimirovich Girsanov",
    translator = "Eizo Nishiura",
	year = "1960",
	title = "On transforming a certain class of stochastic processes by absolutely continuous substitution of measures",
	journaltitle= "Theory of Probability and its Applications",
	volume = "5",
	number = "3",
	pages = "285--301",
	langid = "american",
	doi = "10.1137/1105027",
}

@article{Holmstrom2017,
	author = "Bengt Holmström",
	year = "2017",
	title = "Pay for performance and beyond",
	journaltitle= aer,
	volume = "107",
	number = "7",
	pages = "1753--1777",
	langid = "american",
	doi = "10.1257/aer.107.7.1753",
}

@article{JowellPrescottclarke1970,
	author = "Roger Jowell and Patricia Prescott-Clarke",
	year = "1970",
	title = "Racial discrimination and white-collar workers in {B}ritain",
	journaltitle= "Race \& Class",
	volume = "11",
	number = "4",
	pages = "397--417",
	langid = "american",
	doi = "10.1177/030639687001100401",
}

@article{KartikLeeSuen2021,
	author = "Navin Kartik and Frances Xu Lee and Wing Suen",
	year = "2021",
	title = "Information Validates the Prior",
    subtitle = "A Theorem on {B}ayesian Updating and Applications",
	journaltitle= "American Economic Review: Insights",
	volume = "3",
	number = "2",
	pages = "165--822",
	langid = "american",
	doi = "110.1257/aeri.20200284",
}

@article{LangLehmann2012,
	author = "Kevin Lang and Jee-Yeon K. Lehmann",
	year = "2012",
	title = "Racial discrimination in the labor market",
    subtitle = "Theory and empirics",
	journaltitle= jel,
	volume = "50",
	number = "4",
    pages = "959--1006",
    doi = "10.1257/jel.50.4.959",
}

@article{LangSpitzer2020,
	author = "Kevin Lang and Ariella Kahn-Lang Spitzer",
	year = "2020",
	title = "Race discrimination",
    subtitle = "An economic perspective",
	journaltitle= jep,
	volume = "34",
	number = "2",
    pages = "68--89",
    doi = "10.1257/jep.34.2.68",
}

@article{Laplace1774,
	author = "Pierre-Simon Laplace",
	year = "1774",
	title = "Mémoire sur la probabilité des causes par les évènements",
	journaltitle= "Mémoires de Mathématique et de Physique, Présentés à l'Académie Royale des Sciences",
	volume = "6",
	number = "",
	pages = "621--656",
	doi = "",
}

@article{LaouenanRathelot2022,
	author = "Morgane Laouénan and Roland Rathelot",
	year = "2022",
	title = "Can information reduce ethnic discrimination? {E}vidence from {A}irbnb",
	journaltitle= "American Economic Journal: Applied Economics",
	volume = "14",
	number = "1",
	pages = "107--132",
	langid = "american",
	doi = "10.1257/app.20190188",
}

@article{LiangLuMuOkumura2025,
	author = "Annie Liang and Jay Lu and Xiaosheng Mu and Kyohei Okumura",
	year = "2025",
	title = "Algorithm design",
    subtitle = "A fairness--accuracy frontier",
	journaltitle= "Journal of Political Economy",
	volume = "134",
	number = "5",
	pages = "1401--1467",
	langid = "american",
	doi = "10.1086/739826",
}

@article{Manning2021,
	author = "Alan Manning",
	year = "2021",
	title = "Monopsony in labor markets",
    subtitle = "A review",
	journaltitle= "ILR Review",
	volume = "74",
	number = "1",
    pages = "3--26",
    doi = "10.1177/0019793920922499",
}

@article{MartinMarx2021,
	author = "Daniel Martin and Philip Marx",
	year = "2021",
	title = "A robust test of prejudice for discrimination experiments",
	journaltitle= "Management Science",
	volume = "68",
	number = "6",
    pages = "4527--4536",
    doi = "10.1287/mnsc.2022.4396",
}

@article{Marx2022,
	author = "Philip Marx",
	year = "2022",
	title = "An absolute test of racial prejudice",
	journaltitle= "Journal of Law, Economics, and Organization",
	volume = "38",
	number = "1",
    pages = "42--91",
    doi = "10.1093/jleo/ewab002",
}

@article{Mookherjee2006,
	author = "Dilip Mookherjee",
	year = "2006",
	title = "Decentralization, hierarchies, and incentives",
    subtitle = "A mechanism design perspective",
	journaltitle= jel,
	volume = "44",
	number = "2",
	pages = "367--390",
	langid = "american",
	doi = "10.1257/jel.44.2.367",
}

@article{MorrisShin1997,
	author = "Stephen Morris and Hyun Song Shin",
	year = "1997",
	title = "The rationality and efficacy of decisions under uncertainty and the value of an experiment",
	journaltitle= et,
	volume = "9",
	number = "2",
	pages = "309--324",
	langid = "american",
	doi = "10.1007/BF01213803",
}

@article{Neumark2018,
	author = "David Neumark",
	year = "2018",
	title = "Experimental research on labor market discrimination",
	journaltitle= jel,
	volume = "56",
	number = "3",
    pages = "799--866",
    doi = "10.1257/jel.20161309",
}

@article{OnuchicRay2023,
	author		= "Paula Onuchic and Debraj Ray",
	year		= "2023",
	title		= "Conveying value via categories",
	journaltitle= te,
	volume		= "18",
	number		= "4",
	pages		= "1407--1439",
	langid		= "american",
	doi			= "10.3982/TE5026",
}

@article{Phelps1972,
	author = "Edmund S. Phelps",
	year = "1972",
	title = "The statistical theory of racism and sexism",
	journaltitle= aer,
	volume = "62",
	number = "4",
	pages = "659--661",
}

@article{Rothstein2004,
	author		= "Jesse M. Rothstein",
	year		= "2004",
	title		= "College performance predictions and the {SAT}",
	journaltitle= "Journal of Econometrics",
	volume		= "121",
	number		= "1--2",
	pages		= "297--317",
	langid		= "american",
	doi			= "10.1016/j.jeconom.2003.10.003",
}

@article{SchwartzSkolnick1962,
	author = "Richard D. Schwartz and Jerome H. Skolnick",
	year = "1962",
	title = "Two studies of legal stigma",
	journaltitle= "Social Problems",
	volume = "10",
	number = "2",
	pages = "133--142",
	langid = "american",
	doi = "10.2307/799046",
}

@article{YehMacalusoHershbein2022,
	author = "Chen Yeh and Claudia Macaluso and Brad Hershbein",
	year = "2022",
	title = "Monopsony in the {US} labor market",
	journaltitle= aer,
	volume = "112",
	number = "7",
    pages = "2099--2138",
    doi = "10.1257/aer.20200025",
}

@unpublished{Bordoli2025,
	author = "Davide Bordoli",
	year = "2026",
	title = "Non-{B}ayesian updating and value of information",
	note = "working paper, 31 Mar 2026",
	langid = "american",
}

@unpublished{BharadwajDebRenou2024,
	author = "Prashant Bharadwaj and Rahul Deb and Ludovic Renou",
	year = "2024",
	title = "Statistical discrimination and the distribution of wages",
	note = "working paper, Jun 2024",
	langid = "american",
	doi  = "10.3386/w32562",
}

@unpublished{Braghieri2023,
	author = "Luca Braghieri",
	year = "2023",
	title = "Biased decoding and the foundations of communication",
	note = "working paper, 22 Feb 2023",
	langid = "american",
	doi  = "10.2139/ssrn.4366492",
}

@unpublished{CowgillTucker2020,
	author = "Bo Cowgill and Catherine E. Tucker",
	year = "2020",
	title = "Algorithmic fairness and economics",
	note = "working paper, 24 Sep 2020",
	langid = "american",
	doi  = "10.2139/ssrn.3361280",
}

@unpublished{LitwinLow2025,
	author = "Ashley Litwin and Corinne Low",
	year = "2025",
	title = "Measuring discrimination with experiments",
	note = "working paper, 9 Oct 2025",
	langid = "american",
}

@unpublished{Onuchic2025,
	author = "Paula Onuchic",
	year = "2025",
	title = "Recent contributions to theories of discrimination",
	note = "working paper, 31 Dec 2025",
	langid = "american",
	doi  = "10.48550/arXiv.2205.05994",
}

@unpublished{Whitmeyer2024,
	author = "Mark Whitmeyer",
	year = "2026",
	title = "{B}lackwell-monotone updating rules",
	note = "working paper, 17 Apr 2026",
	langid = "american",
    doi = "10.48550/arXiv.2302.13956",
}

@incollection{Arrow1973,
	author = "Kenneth J. Arrow",
	year = "1973",
	title = "The theory of discrimination",
	editor = "Orley Ashenfelter and Albert Rees",
	booktitle = "Discrimination in labor markets",
	pages = "3--33",
	publisher = pup,
	address = "Princeton, NJ",
	langid = "american",
    doi = "10.1515/9781400867066-003",
}

@incollection{BertrandDuflo2017,
	author = "Marianne Bertrand and Esther Duflo",
	year = "2017",
	title = "Field experiments on discrimination",
	editor = "Abhijit Vinayak Banerjee and Esther Duflo",
	booktitle = "Handbook of economic field experiments",
    volume = "1",
	pages = "309--393",
	publisher = "North-Holland",
	address = "Amsterdam",
	langid = "american",
	doi  = "10.1016/bs.hefe.2016.08.004",
}

@incollection{Blackwell1951,
	author = "David Blackwell",
	year = "1951",
	title = "Comparison of experiments",
	editor = "Jerzy Neyman",
	booktitle = "{B}erkeley Symposium on Mathematical Statistics and Probability",
	volume = "2",
	pages = "93--102",
	publisher = "University of California Press",
	address = "Berkeley, CA",
}

@incollection{FangMoro2011,
	author = "Hanming Fang and Andrea Moro",
	year = "2011",
	title = "Theories of statistical discrimination and affirmative action",
	subtitle = "A survey",
	editor = "Jess Benhabib and Matthew O. Jackson and Alberto Bisin",
	booktitle = "Handbook of social economics",
    volume = "1A",
	pages = "133--200",
	publisher = "North-Holland",
	address = "Amsterdam",
	langid = "american",
	doi  = "10.1016/B978-0-444-53187-2.00005-X",
}

@inbook{Morris1991,
	author = "Stephen Morris",
	year = "1991",
    chapter = "3",
    title = "The value of misinterpreted information",
	booktitle = "The Role of Beliefs in Economic Theory",
    booktitleaddon = "{PhD} dissertation, {Y}ale {U}niversity",
	pages = "58--86",
}

@phdthesis{Becker1955,
	author = "Gary Stanley Becker",
	year = "1955",
	title = "The economics of racial discrimination",
	institution = "University of Chicago",
	langid = "american",
}

@book{Becker1957,
	author = "Gary Stanley Becker",
	year = "1957",
	title = "The economics of discrimination",
	edition = "",
	publisher = "University of Chicago Press",
	address = "Chicago, IL",
	langid = "american",
}

@book{Daniel1968,
	author = "William Wentworth Daniel",
	year = "1968",
	title = "Racial discrimination in {E}ngland",
    subtitle = "Based on the {PEP} report",
	publisher = "Penguin",
	address = "Harmondsworth",
	langid = "american",
}

@book{GibbonsRoberts2013,
	editor = "Robert Gibbons and John Roberts",
	year = "2013",
	title = "The handbook of organizational economics",
	edition = "",
	publisher = pup,
	address = "Princeton, NJ",
	langid = "american",
    doi = "10.1515/9781400845354",
}

@book{Manning2003,
	author = "Alan Manning",
	year = "2003",
	title = "Monopsony in motion",
	subtitle = "Imperfect competition in labor markets",
	publisher = pup,
	address = "Princeton, NJ",
	langid = "american",
    doi = "10.2307/j.ctt5hhpvk",
}

@book{Phelps1972book,
	author = "Edmund S. Phelps",
	year = "1972",
	title = "Inflation policy and unemployment theory",
	subtitle = "The cost--benefit approach to monetary planning",
	publisher = "Macmillan",
	address = "London",
	langid = "american",
}

@book{Samuelson1947,
	author = "Paul Anthony Samuelson",
	year = "1947",
	title = "Foundations of economic analysis",
	publisher = "Harvard University Press",
	address = "Cambridge, MA",
	langid = "american",
}

@book{Wald1947,
	author = "Abraham Wald",
	year = "1947",
	title = "Sequential analysis",
	publisher = "Wiley",
	address = "New York, NY",
	langid = "american",
}

@online{DoL2025,
  author = "{U.S. Department of Labor}",
  title = "Earnings and earnings ratios by sex, race, and occupation group",
  year = 2025,
  url = "https://www.dol.gov/agencies/wb/data/earnings/wage-gap-race-occupation",
  urldate = "2025-08-18",
}

@online{ONS2023ethnicity,
  author = "{Office for National Statistics}",
  title = "Ethnicity pay gaps, {UK}: 2012 to 2022",
  year = 2023,
  url = "https://www.ons.gov.uk/employmentandlabourmarket/peopleinwork/earningsandworkinghours/articles/ethnicitypaygapsingreatbritain/2012to2022",
  urldate = "2025-08-18",
}

@online{ONS2024disability,
  author = "{Office for National Statistics}",
  title = "Disability pay gaps, {UK}: 2014 to 2023",
  year = 2024,
  url = "https://www.ons.gov.uk/peoplepopulationandcommunity/healthandsocialcare/disability/articles/disabilitypaygapsintheuk/2014to2023",
  urldate = "2025-08-18",
}

@online{ONS2024gender,
  author = "{Office for National Statistics}",
  title = "Gender pay gap in the {UK}: 2024",
  year = 2024,
  url = "https://www.ons.gov.uk/employmentandlabourmarket/peopleinwork/earningsandworkinghours/bulletins/genderpaygapintheuk/2024",
  urldate = "2025-08-18",
}


\end{document}